\begin{document}

\title{Revenue Maximization for Consumer Software: Subscription or Perpetual License?}
\titlerunning{Revenue Maximization for Consumer Software}
%
\author{Ludwig Dierks \and
Sven Seuken }
\authorrunning{L. Dierks and S. Seuken}
%
\institute{Department of Informatics, University of Zurich\\
\email{\{dierks,seuken\}@ifi.uzh.ch}}
\maketitle              
\begin{abstract}
We study the revenue maximization problem of a publisher selling consumer software. We assume that the publisher sells either traditional perpetual licenses,  subscription licenses, or both. For our analysis, we employ a game-theoretic model, which enables us to derive the users' equilibrium strategies and the publisher's optimal pricing strategy. Via extensive numerical evaluations, we then demonstrate the sizable impact different pricing strategies have on the publisher's revenue, and we provide comparative statics for the most important settings parameters.  Although  in practice, many publishers still only sell perceptual licenses, we find that offering a subscription license in addition to a perpetual license typically (but not always) leads to significantly higher revenue than only selling either type of license on its own.


\keywords{Revenue Management \and Pricing \and Consumer Software\and Subscription \and Product Differentiation}
\end{abstract}
%
%
\section{Introduction}
Consumer software, particularly video games, is a multi-billion dollar industry \cite{superdata,newzoo}.
Originally sold on physical media like CDs or DvDs, the rise of fast network connections has allowed software markets to become increasingly digital, eschewing any physical medium. This has brought with it a proliferation of new business models, for example microtransactions (i.e., the sale of many mini-upgrades for small amounts of money), lootboxes (i.e., randomized microtransactions \cite{chen2019loot}) or in-game advertisement \cite{burns2016sensitivity}.

In this paper, we analyze the revenue maximization problem of a software publisher who, while still focused on selling \textit{licenses} for his product, is open to do this either in the form of perpetual or subscription licenses.  Whereas  a classic perpetual license, once bought, allows a user access to the product for as long as he desires (or, in some more recent cases, as long as the publisher supports it), a subscription license only allows access to the product for as long as the user pays a (typically monthly) recurring fee.\footnote{Note that this is distinct from subscription services that give access to constantly changing bundles of products (e.g. Xbox Game Pass).} While in recent years, subscription licenses have have become common for cloud-based Software-as-a-Service offerings (where their main selling point is access to cloud hardware), most products that do not come with significant cloud hardware are still only sold through perpetual licenses (though some publishers have recently experimented with subscription models \cite{eu4,trackmania}). Since we are interested in the revenue effects of selling the product itself (as opposed to additional cloud-based features), we exclude any cloud-based synergy from our analysis.

For a publisher, offering a subscription model has a few obvious advantages compared to perpetual licenses: the barrier of entry gets reduced and natural product differentiation takes place between users depending on how long they are interested in the product. This potentially allows the publisher to obtain far higher revenue from some users than he could with perpetual licenses and may allow him to support the product for a longer period of time through a continuous revenue stream.

On the other hand, offering a subscription model also comes with certain disadvantages.  Users may stop subscribing once a product's novelty fades, while some users that would use the product for a long time with low intensity may not be willing to pay a recurring price at all. In addition, if the option to alternatively buy a perpetual license is also offered, ``market cannibalization'' between both offerings may occur. And lastly, but importantly, while publishers traditionally sell upgrades to keep their product up to date or expand its features, with a subscription model it is typically assumed that users always obtain access to the most recent version (not including optional micro-transactions).

In this paper, we take a game-theoretic approach towards analyzing the merit of offering subscription licenses instead of or in addition to perpetual licenses. Selling a product over some time horizon is fundamentally a question of revenue management \cite{gallego1994optimal,chen2019efficacy} and it is important to take user behavior into account, as users for example may delay a purchase to wait for a reduction in prices. In contrast to classic revenue management problems, software as a purely digital good has neither a limited stock nor marginal costs. Instead, the quality of the product in the eyes of users continuously decays. Furthermore, offering a subscription option and offering paid but optional upgrades both constitute forms of product differentiation \cite{mussa,moorthy,desai2001quality}, though again with very particular cost and utility structures that differ from the classic literature.  
 In the past, the revenue effects of subscriptions have been studied for some other domains like ancillary services of a repeatedly sold core product (e.g., additional baggage for airline tickets) \cite{wang2019pricing} or professional Software-as-a-Service offerings (where, importantly, subscriptions provide scalable hardware while buy options do not, and utilities take a very different form than for consumer software) \cite{rohitratana2012impact}.
 
To properly analyze the problem and capture all its particularities, we introduce a tailor-made model that takes the form of a two-step game. In the first step, the publisher chooses his pricing strategy; in the second step, the users act inside of a discrete time sub-game where they arrive and dynamically obtain and lose demand for the product. We prove that there are only five distinct classes of user equilibrium strategies, which significantly aids in our analysis. Based on this, we derive the publisher's revenue as a function of his pricing strategy and show that only offering a subscription option can never be optimal. We show that depending on the setting, either only offering perpetual licences or offering a subscription option in addition to perpetual licenses can be optimal, though for most software products, offering both options is likely to lead to the best possible revenue.  Through comparative statics we further evaluate the influence different setting  parameters have on the revenue of the various pricing  strategies.

\section{Model}
We model the problem as a two stage game. First, the publisher commits to a pricing strategy. Then, over $n_{max}$ timesteps users arrive to the system. Once a user has arrived to the system, he faces a game with multiple timesteps. We can model this as an infinite time horizon Markov Decision Process (MDP) where he can take actions and obtains rewards depending on the state he is in. 

\subsection{Publisher Model}
The publisher wants to sell a digital product. After $m$ timesteps, he offers an optional upgrade to the product.\footnote{We limit ourselves in the analysis to exactly \textit{one} upgrade after exactly $m$ time steps to simplify the exposition. Turning $m$ into a strategic variable, as well as extending our model to more than one upgrade or multiple price changes is straightforward.} The product has a base quality $q_1$, which the upgrade additively increases by $q_2$. While the base product is always available, the upgrade only becomes available from timestep $m$ onwards. Because of the digital nature of the good, we assume that the publisher has infinite supply and no marginal costs.
The publisher's strategy space consists of his choice of price vector $p=(p_1^{<m},p_1^{\geq m},p_2, p_S)$. Thus, he offers a menu of  options to his users: (1) buy the base product for a  one-time payment $p_1^n = p_1^{<m}$ (if bought in timestep $n<m$) or (2) for $p_1^n=p_1^{\geq m}$ (if bought in timestep $n\geq m$), (3) buy the upgrade for a one-time payment $p_2$ (only offered in timesteps $n\geq m$), or (4) subscribe for price $p_S$ per time step. A product that is bought can be used forever, but the upgrade needs to be bought separately. On the other hand, a subscription gives immediate access to all available upgrades, but the user loses access to the product when his subscription lapses.
 
 A publisher can choose  to not offer a buy or subscribe option by setting the corresponding price to infinity. The publisher's utility is equal to his expected revenue per user. 

\subsection{User model}
Users are identified by their state and type. 
A user's state is given by a tuple $\sigma = (d, o)$.  While each user starts out interested in obtaining access to the product, after using the product for some time this interest may vanish. The demand $d\in \left\lbrace0,1\right\rbrace$ denotes whether the user is still interested in the product, i.e., whether he obtains any utility for having access to it.  The ownership vector $o\in \left\lbrace0,1\right\rbrace^2$ denotes whether a user owns the base product, i.e.,  $o_1=1$ or the upgrade, i.e.,  $o_2=1$. 

A user's type is a tuple $\tau = (n_a, \delta,\gamma,v)$.
$n_a \in \left\lbrace 1,\ldots, n_{max}\right\rbrace$ denotes the timestep the user arrives into the system, i.e., the earliest time he could buy the product and is drawn from a distribution with pmf $f_a$.
$\delta \in (0,1)$ denotes the user's long term engagement factor with the product and is drawn from a distribution with pmf $f_\delta$. While any arriving user starts with demand $d=1$ for the product, in any timestep in which he uses the product he has a probability of $1-\delta$ to become uninterested and lose demand (setting $d=0$). A user who has lost demand no longer obtains utility from having access to the product. The release of the upgrade has the complementary probability $\delta$ to rekindle a lapsed user's interest and set $d=1$. 
$\gamma\in (0,1)$ denotes the quality decay factor of the user and is drawn from a distribution with pmf $f_\gamma$. While the product and the upgrade have qualities $q_i$ in the timesteps in which they become available, the realized quality of the product for the users decreases every timestep as hype and novelty fade and it slowly becomes outdated. 
The \textit{realized quality }of having access to $o$ at time $n$ is given by $q(o,\gamma, n)=  \sum_{i\in o} \gamma^{n-m_i} q$.
Lastly, $v\in [0,v_{max}]$ denotes the value a user has for a product of quality $1$, as long as he has demand. $v$ is drawn from a distribution with pdf $f_v$

A user's action space in any timestep consists of whether he subscribes $S\in \left\lbrace 0,1 \right\rbrace$ or buys the product or the upgrade $b\in \left\lbrace 0 ,1\right\rbrace^{2}$. Subscription gives a user immediate access to anything currently released, i.e., $o_S^n = [1,0]$ if $n<m$ and $o_S^n = [1,1]$ otherwise, while buying $b$ gives ownership of the bought product, i.e., changes the ownership vector from $o$ to $o'= max(o,b)$.

The \emph{normalized immediate reward} $w_n$ of a user of type $\tau$ in timestep $n$ and state $(d, o)$ is given by
$w_n(S,b,\tau, \sigma) =d  \left((1-S)q(max(o,b), \gamma, n) +S q(o_S^n, \gamma, n)\right)$, while his \emph{immediate payment} is given by $\rho_n(S,b,\tau, \sigma,p) = p_S S + p_1^n b_1 + p_2 b_2$.
His overall \emph{immediate utility} in timestep $n$ is therefore given by 
$u_n(S,b,\tau, \sigma, p) =v w_n(S,b,\tau, \sigma)  - \rho_n(S,b,\tau, \sigma,p)$.

A strategy $\alpha(n, \sigma):\mathbb{N} \times \left\lbrace 0,1 \right\rbrace \times  \left\lbrace 0,1 \right\rbrace^2 \rightarrow \left\lbrace 0,1 \right\rbrace \times  \left\lbrace 0,1 \right\rbrace^2$ maps timesteps and user states to actions.
The \emph{normalized expected reward} for playing strategy $\alpha$ is given by 
\begin{align}w(\alpha,\tau) = \sum_{n=n_a}^\infty \sum_{\sigma'} P(\sigma_n= \sigma'|\alpha, \tau, \sigma) w_n(S,b,\tau, \sigma),\end{align} 
where $P(\sigma_n= \sigma'|\alpha, \tau, \sigma)$ denotes the probability of the user being in state $\sigma'$ during timestep $n$ given $\alpha, \tau, \sigma$.
Similarly, the expected payment is given by  
\begin{align}\rho(\alpha,\tau,p) = \sum_{n=n_a}^\infty \sum_{\sigma'} P(\sigma_n= \sigma'|\alpha, \tau, \sigma) \rho_n(S,b,\tau, \sigma,p).\end{align} 
 A user's overall \emph{expected utility} with strategy $\alpha$ is consequently given by 
\begin{align}u(\alpha,\tau) = v w(\alpha,\tau)- \rho(\alpha,\tau,p).\end{align} 

\section{User Equilibrium Strategies}\label{Sec:MDP}
Before we can analyze the publisher's revenue, we first need to determine how users would react to any given publisher strategy. Since the supply of software is unlimited and since we do not model any social effects, the utility of a given user is independent of the strategies of the other users. We can therefore find the equilibrium strategy of each user by solving his individual MDP in isolation.  

For any given user type $\tau$, we could directly do this through backward induction. But doing so on a user's full strategy space is computationally very costly, even for one user. As we later need to compute the optimal strategies for each user type to calculate the publisher's revenue, we now show that the optimal strategies for each user type can only come from a small set of possible strategies. Note that throughout this section, most expressions  depend on the publisher strategy $p$. For the sake of readability, we keep this dependency implicit and omit $p$ wherever doing so does not cause confusion.

A first important observation for identifying potentially optimal user strategies is that conceptually, the MDP for each user type can be seen to consist of two distinct parts: anything that happens before the upgrade is released in timestep $m$ and anything that happens afterwards. For notational clarity and ease of exposition, we split user strategies in this manner, i.e., setting $\alpha=(\alpha_1,\alpha_2)$, where $\alpha_1$ denotes a user's strategy before $m$ and $\alpha_2$ denotes his strategy beginning from timestep $m$.  

Before timestep $m$, a user's actions are restricted to buying the base product, doing nothing or subscribing. Note that subscribing in any timestep yields the same immediate reward as owning the base product. Given this, we easily obtain the following result that shows that there is only one potentially optimal strategy  that involves  buying the product and one potentially optimal strategy that involves subscribing (though possibly for zero timesteps). Overloading notation, we denote these by $\alpha_1=b$ and $\alpha_1 = s$, respectively.  

\begin{lemma}\label{LEM:1}\begin{enumerate}
                \item 
        For a user of type $\tau=(n_a, \delta, \gamma, v)$ that buys the base product before timestep $m$, the optimal strategy $\alpha_1=b$ has him buy in the timestep $n_a$ he arrives and not subscribe in any timestep $n<m$. 
        
\item   For a user of type $\tau= (n_a, \delta, \gamma, v)$ that does not buy the base product before timestep $m$ and plays some $\alpha_2$ from timestep $m$ onwards, there exists $n_1^{\alpha_2,\tau}$ such that the optimal strategy $\alpha_1=s$ has him  subscribe in any timestep $n <n_1^{\alpha_2,\tau}$ where he has demand and no other timesteps.  It holds that $n_1^{\alpha_2,\tau}$ is the smallest $n\geq0$ for which it holds
        \begin{align}
        v <& \begin{cases}      
\frac{p_S-(1-\delta)^2 \rho((\alpha_s,\alpha_2),\tau',p) }{q([1,0],\gamma, n)-(1-\delta)^2 w((\alpha_s,\alpha_2),\tau')} &\text{if } n < m
        \\
        \infty& \text{if } n = m
        \end{cases}
        \end{align}
        where $\tau'= (m, \delta, \gamma, v)$ (i.e., $\rho((\alpha_s,\alpha_2),\tau', p)$ and $w((\alpha_s,\alpha_2),\tau')$ are the reward and payment the user would obtain if he arrived in timestep $m$). 
        \end{enumerate}
\end{lemma}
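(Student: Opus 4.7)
Plan. The two parts of the lemma both start from the observation that, for $n<m$, $q(o_S^n,\gamma,n)=q([1,0],\gamma,n)$, so subscribing yields exactly the same per-step reward as already owning the base product.

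For Part~1 I would first dispatch the ``does not subscribe'' claim: if the user owns the base at $n<m$, then subscribing there gives the same immediate reward at an extra cost $p_S$, hence is strictly dominated. The harder step is pinning down the purchase time. I would compare ``buy at $n_a$, no subscription'' against any alternative that buys at some $n_b>n_a$ and subscribes on a subset $T\subseteq [n_a,n_b-1]$. Against $T=[n_a,n_b-1]$ the comparison is immediate: reward and demand trajectories coincide with ``buy at $n_a$'', but the sub-then-buy variant wastes $(n_b-n_a)p_S$. For partial $T$ (in particular the pure delay $T=\emptyset$) some timesteps lose access; one has to show the foregone immediate reward exceeds the expected improvement in continuation value at $m$ from carrying more demand. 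Unrolling the demand dynamics shows that improvement appears only with coefficient $(1-\delta)^2\delta^{m-n_b}$ (one $(1-\delta)$ for the marginal demand-loss event, one for the rekindling at $m$), and $\gamma<1$ then makes the $v q([1,0],\gamma,n)$ term dominate.

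For Part~2 I would set up the Bellman recursion on the restricted pre-$m$ sub-MDP with $o=(0,0)$. In state $(n,1)$ the user subscribes iff $v q([1,0],\gamma,n)-p_S \;\ge\; (1-\delta)\bigl(U(n+1,1)-U(n+1,0)\bigr)$, where $U$ is the optimal value conditional on $\alpha_2$. Strict monotonicity of $q([1,0],\gamma,n)$ in $n$ gives the threshold structure: once the user stops subscribing at some $n$, they stop forever. Unrolling from the first stop-subscribing timestep down to $m$ (rewards are zero, demand is frozen there) and applying the $(1-\delta)$-mixing induced by the rekindling step collapses $U(n+1,1)-U(n+1,0)$ to $(1-\delta)\,u((\alpha_s,\alpha_2),\tau')$. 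Substituting yields $v q([1,0],\gamma,n) - p_S < (1-\delta)^2\,u((\alpha_s,\alpha_2),\tau')$, which rearranges to the displayed formula. The convention $v<\infty$ at $n=m$ just records that the user performs no further pre-$m$ action from $m$ onwards.

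The main obstacle I expect is the delay case in Part~1: because demand loss is triggered only by use, waiting to buy is a genuinely available way to preserve demand for the post-$m$ phase, and the proof must quantitatively show that the immediate-reward gain from buying earlier dominates the $(1-\delta)^2$-weighted continuation gain from waiting. Everything else is essentially bookkeeping that mirrors the Bellman manipulation used in Part~2.
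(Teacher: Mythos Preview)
Your Part~2 is essentially the paper's argument recast in Bellman form: the paper argues directly at the ``largest timestep he subscribes'' and notes that subscribing there lowers the probability of reaching $m$ with demand by $(1-\delta)^2$ (one factor for losing demand, one for failing to rekindle), giving the marginal utility $v\,q([1,0],\gamma,n)-p_S-(1-\delta)^2\bigl(v\,w((\alpha_s,\alpha_2),\tau')-\rho((\alpha_s,\alpha_2),\tau',p)\bigr)$, and the threshold follows from monotonicity in $n$. Your Bellman recursion and collapse of $U(n+1,1)-U(n+1,0)$ to $(1-\delta)\,u((\alpha_s,\alpha_2),\tau')$ is exactly this computation, so there is no real difference of approach here.

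For Part~1 you are considerably more careful than the paper, whose entire argument is the one line ``buying later gives less additional realized value but the same payment.'' That sentence does not engage with the demand-preservation issue you single out, and you are right that pure delay (your $T=\emptyset$) is the substantive case: because demand is only lost while using the product, postponing the purchase genuinely raises the probability of reaching $m$ with demand and hence the continuation value. Your plan to compare the foregone immediate reward against a $(1-\delta)^2$-weighted continuation gain is the correct setup.

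The gap is in your last step. You claim that ``$\gamma<1$ then makes the $v\,q([1,0],\gamma,n)$ term dominate,'' but this does not follow without further input: the continuation term contains $q_2$, which the model does not bound relative to $q_1$, so for $q_2\gg q_1$ and small $\delta$ the $(1-\delta)^2$-weighted continuation gain can exceed the pre-$m$ reward you give up by buying early. In other words, the inequality you need is not a consequence of $\gamma<1$ alone. Either you need an additional argument tying the continuation value back to quantities bounded by the pre-$m$ rewards, or you should reinterpret the claim in the weaker form actually used downstream (Proposition~\ref{prop:optS}): buying at some $n_b>n_a$ is dominated by \emph{one of} the two candidates $\alpha_1\in\{b,s\}$, not necessarily by $\alpha_1=b$. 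The paper's one-sentence proof has the same gap; your write-up just exposes it more clearly.
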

\begin{proof}\begin{enumerate}
                \item 
        The first statement follows directly by noting that when buying in a later timestep before $m$, a user's additional realized value compared to subscribing or doing nothing decreased, but his payment did not decrease. He will therefore buy in the first possible timestep and afterwards can not obtain any additional reward through subscribing.   
        
        \item A subscribing user obtains immediate utility $v q([1,0],\gamma, n) - p_S$ in any timestep where he is subscribed, which decreases in $n$. Thus, if he does not subscribe in any timestep $n_1^\tau$, then he will also not subscribe in any later timestep before $m$. Additionally, in any timestep where he is subscribed, he has probability $(1-\delta)$ to lose demand. If he loses demand before timestep $m$, then he still has probability $\delta$ to regain demand in timestep $m$. This means that subscribing in timestep $n<m$ decreases the expected utility he obtains after timestep $m$ by a factor of $\delta^2$.  As he does not buy before $m$, his utility from timestep $m$ onward, if $d=1$, is the same as if he had arrived in timestep $m$, i.e., as if his type was $\tau' = (m, \delta, \gamma, v)$. The overall change in expected utility for subscribing in the largest timestep $n$ he subscribes is thus given by   \begin{align}
        v q([1,0],\gamma, n) -p_s - v (1-t)^2 w((\alpha_s,\alpha_2),\tau')+  (1-t)^2 \rho((\alpha_s,\alpha_2),\tau',p) 
        \end{align}
    and $n_1^\tau$ is the first timestep for which this change would be negative. 
        \end{enumerate}
\end{proof}

Given Lemma \ref{LEM:1} and a strategy $\alpha_2$ to play from timestep $m$ onwards, there are only two potentially optimal strategies before timestep $m$: buy in the timestep a user arrives, denoted by  $\alpha_1=b$, or subscribe until timestep $n_1^{\alpha_2,\tau}$, denoted by $\alpha_1 = s$.
To determine which of these two strategies is optimal, we must next take the user's strategy after the upgrade release into account.

After the upgrade releases, the user's space of potentially optimal strategies grows slightly.
In addition to buying the upgrade, and if not yet owned, the base product ($\alpha_2=b$) or not buying anything ($\alpha_2 = s$), a user might also decide to only buy the base product, but not the upgrade ($\alpha_2 = b_b$). This can for example happen if the base product is heavily discounted after timestep $m$, but the price of the upgrade is set very high. For such a user it might be optimal to first subscribe for a few timesteps, before buying the base product.

\begin{lemma}\label{LEM:2}\begin{enumerate}
                \item 
                For a user of type $\tau$ who buys the upgrade (and, if not yet owned, the base product), the optimal strategy $\alpha_2=b$ is to buy as soon as possible (i.e., in timestep $max(n_a, m)$) and to not subscribe in any timestep $n\geq m$. 
                
                \item  For a user of type $\tau$ with ownership vector $o$ who does not buy anything after timestep $m$, for the optimal strategy $\alpha_2=s$ there exists a timestep $n_2^{o, \tau}\geq m$ such that he subscribes in any timestep $n$ with $m \leq n< n_2^{o,\tau}$ where he has demand $d=1$ and subscribes in no timestep $n\geq n_2^{o, \tau}$. It holds that $n_2^{o,\tau}$ is the smallest $n \geq 0$ for which 
                \begin{align}
                v <& \frac{p_S}{q([1,1]-o,\gamma, n)}.
                \end{align}

                \item For a user of type $\tau$ that only buys the base product after timestep $m$ (and never buys the upgrade),  for the optimal strategy $\alpha_2=b_b$ there exists a timestep $n_3^{\tau}$ such that he subscribes in and only in any timestep $n$ with $m\leq n<n_3^{\tau}$ where he has demand and buys in timestep $n_3^{\tau}$ (if he still has demand).
                It holds that $n_3^{\tau}$ is the smallest $n \geq 0$ for which 
                \begin{eqnarray}
             v < \frac{p_S-(1-\delta) p_1^{\geq m}}{q([0,1],\gamma, n_3^{\tau})}.
                \end{eqnarray}
        \end{enumerate}
\end{lemma}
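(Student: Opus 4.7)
The plan is to treat each of the three parts in sequence, mirroring the single-step deviation style used in the proof of Lemma~\ref{LEM:1}. Two structural facts drive everything: (a) for any fixed ownership vector $o$, the realized quality $q(o, \gamma, n)$ is strictly decreasing in $n$; and (b) the only mechanism in the model that can revive lost demand is the upgrade release in step $m$, so from step $m$ onward a user who has lost demand stays uninterested forever. Both facts make the post-$m$ analysis essentially myopic in many places.

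For part 1, the argument mirrors the first part of Lemma~\ref{LEM:1}: a user who commits to eventually owning both base and upgrade faces a fixed total payment that is independent of when he actually pays it, so any delay strictly loses reward by (a). Once he owns $[1,1]$, subscribing yields no additional access but still costs $p_S > 0$, so it is strictly dominated.

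For part 2, because the user never buys, subscribing in any step $n \geq m$ does not alter his ownership vector and, by (b), cannot influence his future demand either. The decision therefore decouples across steps and reduces to the myopic comparison $v\, q([1,1]-o, \gamma, n) \geq p_S$. Monotonicity from (a) immediately yields the threshold $n_2^{o, \tau}$.

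Part 3 is where the real work lies and is, I expect, the main obstacle. Subscribing is no longer myopic here: it defers the eventual base purchase by one step, during which the user may lose demand and avoid the purchase entirely. I plan to argue via a single-step deviation. Fix a candidate step $n \geq m$ in which the user still has demand and has not yet bought, and compare two paths: (P1) subscribe in $n$, then from $n+1$ onward follow the $\alpha_2 = b_b$ strategy (i.e., buy if demand survives); (P2) buy in $n$ immediately and do nothing thereafter. A line-by-line accounting shows these paths differ only in steps $n$ and $n+1$: in step $n$, (P1) collects $v\, q([0,1], \gamma, n)$ in extra reward and pays $p_S - p_1^{\geq m}$ in extra payment; in step $n+1$, (P1) incurs an expected extra payment of $\delta\, p_1^{\geq m}$, since the deferred purchase only fires if demand survives. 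Collecting terms, the net utility gain from subscribing in step $n$ over buying immediately is $v\, q([0,1], \gamma, n) - \bigl(p_S - (1-\delta) p_1^{\geq m}\bigr)$. By (a) this is strictly decreasing in $n$, so the optimal strategy subscribes on the initial interval where it is positive and purchases at the first step $n_3^\tau$ where it turns non-positive, exactly matching the threshold in the statement.
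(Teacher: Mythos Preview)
Your treatments of parts 1 and 3 are correct and essentially coincide with the paper's proof; in particular, the paper also reduces part 3 to the marginal utility of the \emph{last} subscription step before the base purchase and obtains the same expression $v\,q([0,1],\gamma,n) - p_S + (1-\delta)p_1^{\geq m}$.

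There is, however, a real gap in part 2. Your decoupling claim (``subscribing in any step $n \geq m$ \ldots\ cannot influence his future demand either'') is false when $o = [0,0]$. In the model, a user only risks losing demand in a step in which he actually \emph{uses} the product; a user who owns nothing and does not subscribe has no access and therefore keeps $d=1$ deterministically. Subscribing thus \emph{does} change the future demand distribution in this case, so the post-$m$ decisions do not decouple in the way you assert. (Your argument is fine for $o_1 = 1$, since the user then uses the product every step regardless of subscription and the demand dynamics are subscription-independent.)

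The stated threshold is nevertheless correct, and the fix is exactly the route the paper takes by pointing back to Lemma~\ref{LEM:1}(2). For $o=[0,0]$, if the user subscribes in the (sorted) steps $n_1<\cdots<n_k$ whenever he has demand, his expected utility is $\sum_{j=1}^{k}\delta^{j-1}\bigl(v\,q([1,1],\gamma,n_j)-p_S\bigr)$, since demand can only lapse in a step where he subscribes. For fixed $k$ this is maximized by choosing the earliest $k$ steps (a prefix), and then the marginal term $\delta^{k}\bigl(v\,q([1,1],\gamma,n_0+k)-p_S\bigr)$ changes sign precisely at the myopic condition in the statement. The continuation-value correction that appears in Lemma~\ref{LEM:1}(2) vanishes here because there is no later phase after $m$.
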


\begin{proof}
        
        \begin{enumerate}
                \item  Follows analogously to the proof of the statement for buying before $m$ in Lemma \ref{LEM:1}.
                \item  Follows analogously to the proof of the statement for users who do not buy in Lemma \ref{LEM:1}, though the users value for subscribing depends on whether they already own the base product or not. 
                
                \item   A user who does not buy the upgrade, but who does buy the base product after timestep $m$, optimally does so in the first timestep where he does not subscribe. Not doing so only decreases his reward but not his payment.            
                When subscribing in timestep $n$, such a user obtains an additional value of $q([0,1],\gamma, n)v$ over just owning the base product and makes a payment of $p_S$.  
                He further has a probability of $(1-\delta)$ to lose demand before the next timestep, in which case he does not buy at all, saving  (in expectation) $(1-\delta)p_1^{\geq m}$. This means that a user's utility for subscribing in the highest timestep in which he subscribes is given by $q([0,1],\gamma, n)v-p_S+(1-\delta)p_1^{\geq m}$. As this utility decreases in $n$, the user either does not subscribe at all (i.e. buys in timestep $m$ or when he arrives) or there exists a smallest timestep $n_3^{\gamma, v}$ for which his added utility for subscribing becomes negative and in which he buys.  Note that here the user wants to buy by assumption, even though doing so might not be optimal anymore in timestep $n_3^{\gamma, v}$. If that is the case, then not buying the base product and subscribing until $n_2^{[0,0], \tau}$ (i.e., playing $\alpha_2 = s$) would be better.
        \end{enumerate}

\end{proof}

Given Lemma \ref{LEM:2}, there are only three potentially optimal strategies beginning in timestep $m$: (1) buy the upgrade (and, if not owned yet, the base product) once the upgrade releases in $m$ (or once the user arrives if $n_a>m$), (2) not buy anything and subscribe until timestep $n_2^{o,\tau}$, or (3) subscribe before timestep $n_3^{\tau}$, then buy the base product in time step $n_3^{\tau}$. We denote these by $\alpha_2=b$, $\alpha_2 = s$ and $\alpha_2 = b_b$ respectively.

Taken together, Lemmas \ref{LEM:1} and \ref{LEM:2} describe all potentially optimal strategies for any user.

\begin{proposition}\label{prop:optS}
        It maximizes the expected utility of a user of type $\tau$ to play strategy
        \begin{align}
        \alpha^\ast_{\tau, p} = argmax_{\alpha_1 \in \left\lbrace b,s\right\rbrace, \alpha_2 \in \left\lbrace b,s, b_b\right\rbrace} v w((\alpha_1, \alpha_2), \tau) -\rho((\alpha_1, \alpha_2), \tau, p) 
        \end{align}
\end{proposition}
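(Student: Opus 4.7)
The plan is to directly combine Lemmas \ref{LEM:1} and \ref{LEM:2} through a Cartesian-product argument on the pre-$m$ and post-$m$ strategy components. Since any user strategy naturally decomposes as $\alpha = (\alpha_1, \alpha_2)$, with the two components interacting only through the ownership vector $o$ carried into timestep $m$, the idea is to bound the space of potentially optimal actions componentwise and then take an argmax over the resulting finite set.

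First, I fix an arbitrary continuation $\alpha_2$ from timestep $m$ onwards. Lemma \ref{LEM:1} then restricts the optimal pre-$m$ strategy to one of two parametrized forms: $\alpha_1 = b$ (buy the base product in timestep $n_a$ and never subscribe before $m$), or $\alpha_1 = s$ (subscribe in each timestep $n < n_1^{\alpha_2, \tau}$ where the user has demand, and never buy before $m$). Next, I fix an arbitrary prefix $\alpha_1$, which determines the ownership vector $o$ at timestep $m$. Lemma \ref{LEM:2} then restricts the optimal post-$m$ strategy to one of three parametrized forms: $\alpha_2 = b$ (buy everything not yet owned at $\max(n_a, m)$), $\alpha_2 = s$ (subscribe while $n < n_2^{o, \tau}$, never buy), or $\alpha_2 = b_b$ (subscribe while $n < n_3^{\tau}$, then buy only the base). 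Intersecting the two sets of constraints leaves at most six candidate strategies, and the optimal one is selected by the argmax in the statement.

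The main obstacle is verifying that the two lemmas compose cleanly, because $n_1^{\alpha_2, \tau}$ depends on the post-$m$ continuation while $n_2^{o, \tau}$ depends on the ownership induced by the pre-$m$ prefix, so the parameters of each restriction are determined by the other component. This is resolved by observing that each lemma restricts its component for \emph{every} choice of the other component: Lemma \ref{LEM:1} forces $\alpha_1^\ast \in \{b, s\}$ no matter which $\alpha_2$ is eventually chosen, and Lemma \ref{LEM:2} forces $\alpha_2^\ast \in \{b, s, b_b\}$ no matter which $\alpha_1$ was played. Hence both restrictions must hold simultaneously at the optimum, and the argmax over the six combinations, each evaluated using the thresholds defined in the appropriate lemma, yields the user's optimal expected utility.
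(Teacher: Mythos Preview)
Your proposal is correct and takes essentially the same approach as the paper, which simply states that the result ``follows directly from combining Lemmas~\ref{LEM:1} and~\ref{LEM:2}.'' Your write-up spells out in more detail how the two componentwise restrictions compose and why the apparent circular dependence between $n_1^{\alpha_2,\tau}$ and the ownership vector $o$ causes no difficulty, but the underlying argument is identical.
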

\begin{proof}
        Follows directly from combining Lemmas \ref{LEM:1} and \ref{LEM:2}.
\end{proof}

Results on how to calculate the reward and payments for each strategy in an efficient way can be found in  Appendix \ref{AP1}.

%
%
%
%
%

\section{Publisher Revenue}
 Given the results for the optimal user strategies from Section \ref{Sec:MDP}, we can now give a relatively simple expression for the publisher's revenue.

\begin{proposition}
Given strategy $p= (p_1^{<m}, p_1^{\geq m}, p_2, p_S)$, the publisher's expected revenue per user is given by 
        \begin{align}
        \pi(p) = &\sum_{n_a}\sum_\delta \sum_{\gamma } \int_v   \rho\left(\alpha^\ast_{(n_a,\delta,\gamma,v),p},(n_a,\delta,\gamma,v),p\right) f_a(n_a)f_\delta(\delta)f_\gamma(\gamma)f_v(v) dv
        \end{align}
where   $\alpha^\ast_{(n_a,\delta,\gamma,v),p}$ is given by Proposition \ref{prop:optS}
\end{proposition}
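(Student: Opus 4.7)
The plan is to obtain the formula essentially by unrolling definitions: the publisher's utility is defined as expected revenue per user, and for any fixed publisher strategy $p$ we already know from Proposition \ref{prop:optS} what each user of type $\tau$ does. So the proof reduces to writing down the appropriate expectation and observing that the per-user payment is exactly $\rho(\alpha^\ast_{\tau,p}, \tau, p)$.

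First I would recall from the publisher model that the publisher's utility equals expected revenue per user, so it suffices to compute $\mathbb{E}_\tau[\rho(\alpha^\ast_{\tau,p}, \tau, p)]$. The user model specifies that the four components $n_a, \delta, \gamma, v$ of a user's type are drawn independently from distributions with pmf/pdf $f_a, f_\delta, f_\gamma, f_v$; hence the joint density factorizes and the expectation decomposes into the stated iterated sum-and-integral over $n_a \in \{1,\ldots,n_{max}\}$, $\delta$, $\gamma$, and $v \in [0, v_{max}]$.

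Second, for a fixed type $\tau$ and strategy $p$, I would invoke the observation from Section \ref{Sec:MDP} that because supply is unlimited and there are no social effects, each user's optimization problem is independent of the others, and the optimal strategy is $\alpha^\ast_{\tau,p}$ by Proposition \ref{prop:optS}. By the definition of the expected payment function $\rho(\alpha, \tau, p)$ in the user model, the expected payment the publisher receives from a user of type $\tau$ who plays $\alpha^\ast_{\tau,p}$ is exactly $\rho(\alpha^\ast_{\tau,p}, \tau, p)$, where the expectation is taken over the stochastic transitions in the user's MDP (i.e., over the demand shocks driven by $\delta$). Plugging this per-type payment into the outer expectation over $\tau$ yields the displayed formula.

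There is essentially no hard step; the only subtlety is handling ties in the $\mathrm{argmax}$ of Proposition \ref{prop:optS}. Since the objective $v w((\alpha_1,\alpha_2),\tau) - \rho((\alpha_1,\alpha_2),\tau,p)$ is affine in $v$, for each publisher strategy the set of $v$ values at which two candidate strategies tie is finite, hence of Lebesgue measure zero under the continuous density $f_v$; any tie-breaking convention therefore leaves the integral unchanged. If one wanted to be fully rigorous, one would note this once and then proceed. The main ``obstacle,'' to the extent there is one, is simply to make explicit that this independence-across-users plus measurability of $\alpha^\ast_{\tau,p}$ in $v$ is what allows writing the publisher's revenue as a single iterated expectation rather than as a fixed point over a population game.
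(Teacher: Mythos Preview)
Your proposal is correct and follows the same approach as the paper, which simply states that the result ``follows directly by taking the optimal user strategies $\alpha^\ast_{(n_a,\delta,\gamma,v),p}$ for each type $\tau$ as given by Proposition \ref{prop:optS} and taking the expectation over all types.'' Your write-up is more detailed (independence across users, factorization of the type distribution, and the measure-zero tie-breaking argument), but these are elaborations of exactly the same one-line argument the paper gives rather than a different route.
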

\begin{proof}
 Follows directly by taking the optimal user strategies $\alpha^\ast_{(n_a,\delta,\gamma,v),p}$ for each type $\tau$ as given by  Proposition \ref{prop:optS} and taking the expectation over all types.
\end{proof}

While this expression for the publisher's revenue can easily be evaluated for a given $p$ (employing the results in Appendix \ref{AP1}) it is highly non-convex in $p$ and has many local maxima: Changing prices affects different users differently and usually increase the payments of some users but reduces the payment of other users.  Consequently, first-order derivative tests are very bad indicators for whether a given price vector $p$ is close to optimal. 

Similarly, whether only offering an (optimal) buy or subscription option would lead to better revenue for the publisher depends on the complex interaction of all parameters of the setting. We can find settings where either strategy yields higher revenue for the publisher and there does not seem to exist a simple condition for deciding which is optimal without solving for the optimal price $p$. 
But what we can say is that, while only offering a buy option is optimal in at least some settings, \emph{only} offering a subscription option can never be as good as offering both options to users.

\begin{proposition}While only offering a buy option can be revenue optimal in some settings, only offering a subscription option is never revenue optimal.
\end{proposition}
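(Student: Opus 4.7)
The plan is to show that for any subscription-only pricing $p^S=(\infty,\infty,\infty,p_S)$, a modified strategy that additionally offers a finite buy option yields strictly greater revenue. The central idea is to set the base buy price just above the supremum of expected subscription payments across all user types: then any user who switches to a buy-inclusive strategy pays strictly more than they would have under subscription, while all non-switchers continue paying exactly what they paid before.

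First I would dispose of the degenerate case $\pi(p^S)=0$: either some pricing with a buy option yields positive revenue (strictly better) or none does, in which case $p^S$ ties with every other strategy and is not uniquely optimal. Assuming $\pi(p^S)>0$, let $\bar\rho=\sup_\tau \rho(s,\tau,p^S)$. This supremum is finite by Lemma \ref{LEM:2}: every optimal subscriber truncates at a finite threshold $n_2^{o,\tau}$ because $v\,q([1,1],\gamma,n)$ decays geometrically and eventually falls below $p_S$ for any $p_S>0$, uniformly bounding $E[N_{\mathrm{sub}}\mid\tau]$.

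Next I would construct the candidate $p'=(p_1,p_1,p_2,p_S)$ with $p_1=\bar\rho+\epsilon$ and any $p_2\geq 0$, for sufficiently small $\epsilon>0$. The key structural fact is $w(b,\tau)>w(s,\tau)$ for any type that subscribes under $p^S$: owning the product delivers reward in every timestep the user retains demand, whereas the optimal subscription is truncated at $n_2^{o,\tau}$ past which the tail reward is strictly positive. Picking $\tau^\ast=(n_a^\ast,\delta^\ast,\gamma^\ast,v_{\max})$ such that $\rho(s,\tau^\ast,p^S)$ attains or approaches $\bar\rho$, the switching inequality $v_{\max}(w(b,\tau^\ast)-w(s,\tau^\ast))>p_1-\rho(s,\tau^\ast,p^S)$ reduces to $v_{\max}(w(b)-w(s))>\epsilon$, which holds for small enough $\epsilon$. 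By continuity of $w$ and $\rho$ in $\tau$, a positive-measure neighborhood of $\tau^\ast$ also switches.

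The closing revenue comparison uses one clean bound: by Lemmas \ref{LEM:1} and \ref{LEM:2}, every buy-inclusive optimal strategy pays at least one full base price $p_1$, hence at least $\bar\rho+\epsilon$. So every switcher pays more than $\rho(s,\tau,p^S)\leq\bar\rho$, while non-switchers pay exactly $\rho(s,\tau,p^S)$ since the subscription at price $p_S$ remains available and unchanged. Integrating over types yields $\pi(p')>\pi(p^S)$, contradicting optimality of $p^S$. The main obstacle I anticipate is the measure argument: ensuring a positive-measure switching set requires $f_v,f_\delta,f_\gamma$ to have positive density near $\tau^\ast$, which is mild but needs to be verified; in pathological distributions one would need to locate $\tau^\ast$ more carefully inside the supremum-attaining part of the type space, but the structural ingredients $w(b)>w(s)$ and $p_1>\bar\rho$ remain unchanged.
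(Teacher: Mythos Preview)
Your argument for the second claim (subscription-only is never optimal) is essentially the paper's: introduce a buy option priced just above the supremum subscription payment so that the highest-paying types switch to buying and pay strictly more, while everyone else keeps the subscription and pays the same. Two differences worth noting. First, the paper localizes the construction to the post-$m$ phase only, setting $p_1^{\geq m}=\rho_{\max}^{>m}+\epsilon$ and $p_2=0$ while keeping $p_1^{<m}=\infty$; this leaves pre-$m$ behavior literally unchanged and makes the ``no user pays less'' step immediate. Your version, setting both buy prices to $\bar\rho+\epsilon$, requires more care: a user playing $(s,b)$ buys only with probability $\kappa<1$ at timestep $m$, so your assertion that ``every buy-inclusive optimal strategy pays at least one full base price $p_1$'' is not literally true in expectation and needs a finer comparison (it can be patched, e.g.\ by observing that $\rho_2(\tau)\leq\bar\rho$ for every $\tau$, but this is extra work your write-up skips). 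Second, ``any $p_2\geq 0$'' is too loose: the paper explicitly takes $p_2=0$, which is what makes the reward comparison $w(b)>w(s)$ clean (the switching buyer gets the upgrade too) and rules out the $b_b$ branch.

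There is also a genuine omission: the proposition has two halves, and you address only the second. The paper proves the first half---that buy-only \emph{can} be optimal---by exhibiting a setting with $m=1$ and a single user type, where the expected reward from owning strictly exceeds that from any finite subscription and no product differentiation across types is possible, so a subscription option cannot extract additional revenue. Your proposal does not supply any such example.
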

\begin{proof}
        To see that only offering a buy option can be optimal, consider a setting without a later upgrade (i.e., $m=1$) and only a single user type $\tau$. Since the expected reward for owning is higher than the expected reward for subscribing up to any finite timestep, users are willing to pay more for perpetual licenses than for subscribing. Since there is only one user type and by Lemma \ref{LEM:2} buying users buy in the timestep they arrive, adding an additional subscription option can not extract additional revenue.
 
        To see that only offering a subscription option can never be optimal, assume some $p$ with $p_S<\infty$ and $p_1^{<m}=p_1^{\geq m}=p_2=\infty$ is optimal. Given $p$, let $\tau_{max}$ be the set consisting of the user types that make the highest expected payment after timestep $m$. Denote this payment by $\rho_{max}^{>m}$.  Since the reward for buying in timestep $m$ is always strictly higher than the reward for subscribing from $m$ to any finite timestep, users of type $\tau_{max}$ would be willing to pay $\rho_{max}^{>m}+\epsilon$ for owning the product from timestep $m$ onwards. Setting $p_1^{\geq m}= \rho_{max}^{>m}+\epsilon$, $p_2 = 0$ for $\epsilon >0$ small enough therefore leads to users in a neighborhood around $\tau_{max}$ buying in timestep $m$ and paying strictly more than $\rho_{max}^{>m}$, while no user pays less. Consequently,  $p = (\rho_{max}^{>m}+\epsilon, \infty,0, p_S)$ yields higher revenue for the publisher and only offering subscriptions cannot be revenue optimal. 
\end{proof}

\section{Numerical Evaluation}
To better understand when offering subscriptions can increase a publisher's revenue and by how much it typically does so, we now present a number of numerical examples and comparative statics. For each example, we give the optimal revenue for the \textit{optimal} prices for a publisher  who either (1) only offers a buy option (Opt(Buy)), (2) only offers a subscription option (Opt(Sub)), (3) offers both options (Opt(Both)), or (4) offers both options, but restricts the buy prices of perpetual licenses to those that would be optimal without subscription option (Opt(Both $\vert$ Opt(Buy))).

Since the publisher's revenue as a function of his price vector has many local maxima, all optimal publisher strategies in the following are calculated using a best-of-15 differential evolution search \cite{storn1997differential}. As this is a stochastic search, full optimality for all data points cannot be guaranteed. Those imprecisions can be seen as small local fluctuations of the function values in our plots.

\subsection{Experimental Setup: Video Games}
For our numerical analysis to have merit, we need to choose realistic settings parameters.
We chose the domain of video games for our numerics, because some user data as well as pricing data is available for this domain, which we can use to 
inform our choice of parameters and distributions.\footnote{To the best of our knowledge, no comprehensive study of how users in this domain are typically distributed is available, which is why we have to base our parameter choices on a rough analysis of some limited datasets that we have access to.} 
 In this section, we describe some basic insights we have obtained from the available data and how they have influenced our choice of parameters.  We bought a dataset from the website \textit{Steam Spy} \footnote{https://steamspy.com/about}, which collects publicly available data from the large video game storefront \textit{Steam}\footnote{https://store.steampowered.com/about/} and uses it to statistically estimate the number of owners of video games over time and what percentage of them actively used the game recently (i.e., in the last two weeks). 
While this data is prone to estimation errors and only contains limited information about those users that bought the games, it still allows us to make a number of general observations to help find reasonable distributions. In the following, we describe the general insights we have obtained from analyzing this data.\footnote{The purpose of this exercise was to find \emph{reasonable} parameter settings and distributions for the comparative statics. A detailed empirical analysis (e.g., fitting a statistical model to the data) is beyond the scope of this paper.} We provide plots for some representative games  in Appendix \ref{AP2}. 
\begin{enumerate}
        \item For most games, while the game is supported, its monthly sales  stay roughly constant as long as the price of the game keeps decreasing and afterwards drops off relatively slowly. This suggests that the arrival rate of new users is roughly constant, though the quality decays, which in turn is counteracted by price drops. The exception to this is the release month (+/- 1-2 weeks), which for big releases can have $3-10$ times as many sales, as many users effectively arrived before the game's release but could not buy it yet. Additionally, the release of bigger upgrades often boosts sales of the base game, which can be explained by users in the system holding off their purchase until the upgrade releases. Denoting the overall arrivals during the first timestep by $x_a$, we set the arrival distribution $f_a$ to be 
        \begin{align}
        f_a(n_a)= \begin{cases}
        \frac{x_a}{x_a+n_{max}-1}& \text{if } n_a = 1\\
        \frac{1}{x_a+n_{max}-1}& \text{if } 1<n_a < n_{max}
                \end{cases}
                \end{align}

        While we will later vary $x_a$, we choose $x_a=5$ as the standard value for most of the section. 
        \item While games on Steam usually do not change their base price, most games get regularly (i.e., typically every few weeks) discounted for a limited period of time, and most users buy during those discount periods. Taking this into account, the price of most games effectively drops by $40\%-60\%$ during the first year. As the number of new owners per timestep stays roughly constant and assuming the arrival rate of users is constant, this suggests that the quality of most games for users that do buy on average decays at a rate around $10\%$ per month, i.e., $\gamma = 0.9$. As not much more distributional information is available, for simplicity we assume for our numerics that $\gamma \in \left\lbrace 0.85, 0.9, 0.95\right\rbrace$.  We let $x_\gamma$ denote the probability that $\gamma$ is $0.9$ and set the $\gamma$ distribution $f_\gamma$ to be 
        $f_\gamma(0.9) = x_\gamma$ and $f_\gamma(0.85) = f_\gamma(0.95) = \frac{1}{2}(1-x_\gamma)$.
        While we will later vary $x_\gamma$, we choose $x_\gamma= 0.8$ as the standard value for most of the section. 
        \item While Steam Spy does not contain data on whether a user played a game after a certain date, it does contain data for the percentage of users who own the game and have played it in the last two week. Using this as a proxy for the percentage of users with demand $d=1$, we see that the percentage of users who stop playing after a month for most games varies between $40\%$ and $80\%$ percent. For most games, once the percentage of active users has reached about $20\%$, it starts to only fall very slowly. Accounting for the fact that there are constantly new users buying the game, it is reasonable to assume that $20\%$ of users have a $90\%$ probability to not lose demand in each timestep, i.e., $\delta = 0.9$. We therefore settle on a simple two-type distribution of long-term and short-term users. Denoting the probability that a short-term user does not lose demand after one timestep by $x_\delta$, we obtain $f_\delta(x_\delta) = 0.8$ and $f_\delta(0.9) = 0.2$.
        While we will later vary $x_\delta$, we choose $x_\delta= 0.5$ as the standard value used for most of the section.
                \item The dataset does not contain much information that would allow us to estimate the distribution of user values. We therefore simply set the user values to be distributed according to a normal distribution with mean $\mu = 25$ truncated to $[0,50]$. While we will later vary the standard deviation $\sigma$ of $f_v$, we choose $\sigma = 10$ as the standard value for most of the section.
\end{enumerate}
Additionally, for the numerical analysis, we set $q_1= 1$, $q_2=0.5$, $m=6$ and $n_{max}=12$.

\subsection{Base case}

        \begin{table*}[t!]
                \small
                
                \centering
                
                \begin{tabular}{|c|c|c|c|c||c|c|c|}
                        \hline
                 & $p_1^{<m}$ &  $p_1^{\geq m}$& $p_2$& $p_S$ & Revenue & User Welfare &Overall Welfare  \\
                        \hline \noalign{\smallskip                              }
                        
                        \hline 
                        Opt(Buy) & $45.82$  &$21.8$ &$18.06$&$\infty$ & $31.42$ &$33.23$&$64.65$\\
                \hline  
                                Opt(Sub) & $\infty$  &$\infty$ &$\infty$&$14.66$ & $33.54$ &$24.07$&$57.61$\\
                                \hline 
                                        Opt(Both) & $96.98$ &$35.19$& $47.96$&$17.71$ &$37.88$ &$24.39$& $62.27$\\
                        \hline 
                                                Opt(Both $\vert$ Opt(Buy)) & $45.82$  &$21.8$ &$18.06$&$18.4$ & $31.82$ &$33.89$ &$65.71$\\
                        \hline \noalign{\smallskip}                     
                \end{tabular}
                \caption{Results for four different publisher strategies for the base case}  \label{table1}
        \end{table*}
        
In this section, we discuss the numerical results for the base case with $x_a = 5, x_\gamma = 0.8, x_\delta = 0.5$ and $\sigma = 10$. This parametrization roughly corresponds to a typical game based on our analysis of the Steam Spy data. The results for each type of publisher strategy are summarized in Table \ref{table1}.

The best attainable revenue for a publisher who only wants to sell perpetual licenses without offering a subscription option (i.e., Opt(Buy)) is $\pi(p)=31.42$. As this revenue is attained with a relatively low price  (that is further discounted roughly $50\%$ once the upgrade releases), the publisher ensures that most users buy his product. This leaves the users with relatively high utility for owning the game  and thus the users' social welfare (i.e., the expected utility of a randomly drawn user) is relatively high at $33.23$.

If the publisher would instead only offer a subscription option (i.e., Opt(Sub)), his expected revenue per user increases to $\pi= 33.54$, a substantial $6.7\%$ increase over only offering the buy option. This is possible because, when only offering a buy option the publisher had to set a relatively low price to attract users that are only interested in playing the game for a short time as well as users that want to play it for a long time. The subscription option on the other hand automatically price discriminates between those user types and extracts more revenue from long-term users.  Consequently, this revenue increase comes at the cost of the user welfare, which decreases substantially. Unfortunately, this is not simply a transfer of utility from the users to the publisher, as the users welfare decreases more than the publisher's revenue increases. This loss is caused because users whose perceived quality of the game decayed too much stop subscribing, even though they would like to continue playing.  They are simply not willing to pay the subscription price anymore. This  decreases the system's overall welfare (i.e., the sum of revenue and user welfare) by $11\%$.

If the publisher decides to offer both perpetual and subscription licenses (i.e., Opt(Both)), then he can further increase his revenue to $\pi= 37.87$, an additional $12.9\%$ increase over only offering a subscription option and a staggering $20.5\%$ increase over only selling perpetual licenses.
This revenue increase requires that all prices rise compared to when only one of the two alternative license types are offered. While the subscription price only moderately increases to $p_S= 17.71$, the buy prices roughly double to $p_1^{<m}= 98.98, p_1^{\geq m}= 35.19, p_2=47.96$. 
Interestingly, with these prices, any user that subscribes for $3$ or less timesteps before the price change and then buys the discounted base product still pays less overall than if he would have bought the base product directly. Consequently, we see that $52.9\%$ of users that arrive in timestep $1$ only subscribe at first, but ultimately buy a perpetual licenses once it is discounted (as long as they still have demand). Only $15.3\%$ of users that arrive in timestep $1$ are willing to directly buy the game at its high starting price.  
This truly splits the user base into two parts. First there are casual users that subscribe for a few timesteps and often pay even less than they would with the buy option. Then there are power users that plan to use the game for a long time, often longer than they would be willing to subscribe, and are thus willing to pay the increased buy price. Consequently, despite the notable revenue increase, user welfare does not decrease further and, compared to only offering a subscription option, even slightly increases to $24.4$. While this is still notably lower than the user welfare when only offering perpetual licenses, the system's overall welfare (i.e., the sum of revenue and user welfare) now is only about $3.7\%$ lower, showing that most of the user welfare gets transferred to the publisher instead of being lost.

Lastly, we analyze whether the publisher can increase his profit by offering a subscription option without changing the buy prices of perpetual licenses (i.e., Opt(Both$\vert$ Opt(Buy))).
This would guarantee that no user is worse off than when only perpetual licenses are offered, which is an important consideration when there are  competing products. Any revenue increase under such a pricing model has to come from additional users that do not buy perpetual licenses even when no subscription option is offered. 
If the publisher fixes his buy prices at  $p_1^{<m}= 45.82, p_1^{\geq m}= 21.8, p_2=18.06$, the optimal subscription price is $p_S =18.4$. Considering that the publisher wants to attract additional users, it might seem counterintuitive that $p_S$ here is even higher than the subscription price that was optimal combined with the far higher optimal buy price. This effect occurs because the lower the buy price, the more users with low long-term engagement are willing to buy when no subscription option is offered. But those same users readily switch to subscribing and pay even less when $p_S$ is low, increasing market cannibalization.  
Consequently, the attainable revenue increase is comparatively modest, with the publisher obtaining at most $\pi = 31.82$, an increase of $1.3\%$. While this pales compared to the potential revenue increase with fully optimized prices, it can still constitute an additional revenue of hundredth of thousands or even millions of dollars for large releases. Importantly, since this pricing strategy, by construction, leads to a Pareto improvement for the users, user welfare also slightly increases.

\subsection{Comparative Statics}
We now study how varying the setting parameters $x_\delta, x_\gamma, \sigma, x_a$ affects the publisher's revenue under his four different strategies. In Figures \ref{FIG:vart} to \ref{FIG:vara}, we present comparative statics for how the optimal revenue of each type of publisher strategy changes in relation to the revenue of only offering a buy option (which we normalize to 1). 

       \begin{figure}[h!]
        \centering%
        \begin{minipage}{0.50\textwidth}
                \includegraphics[width=0.99 \textwidth]{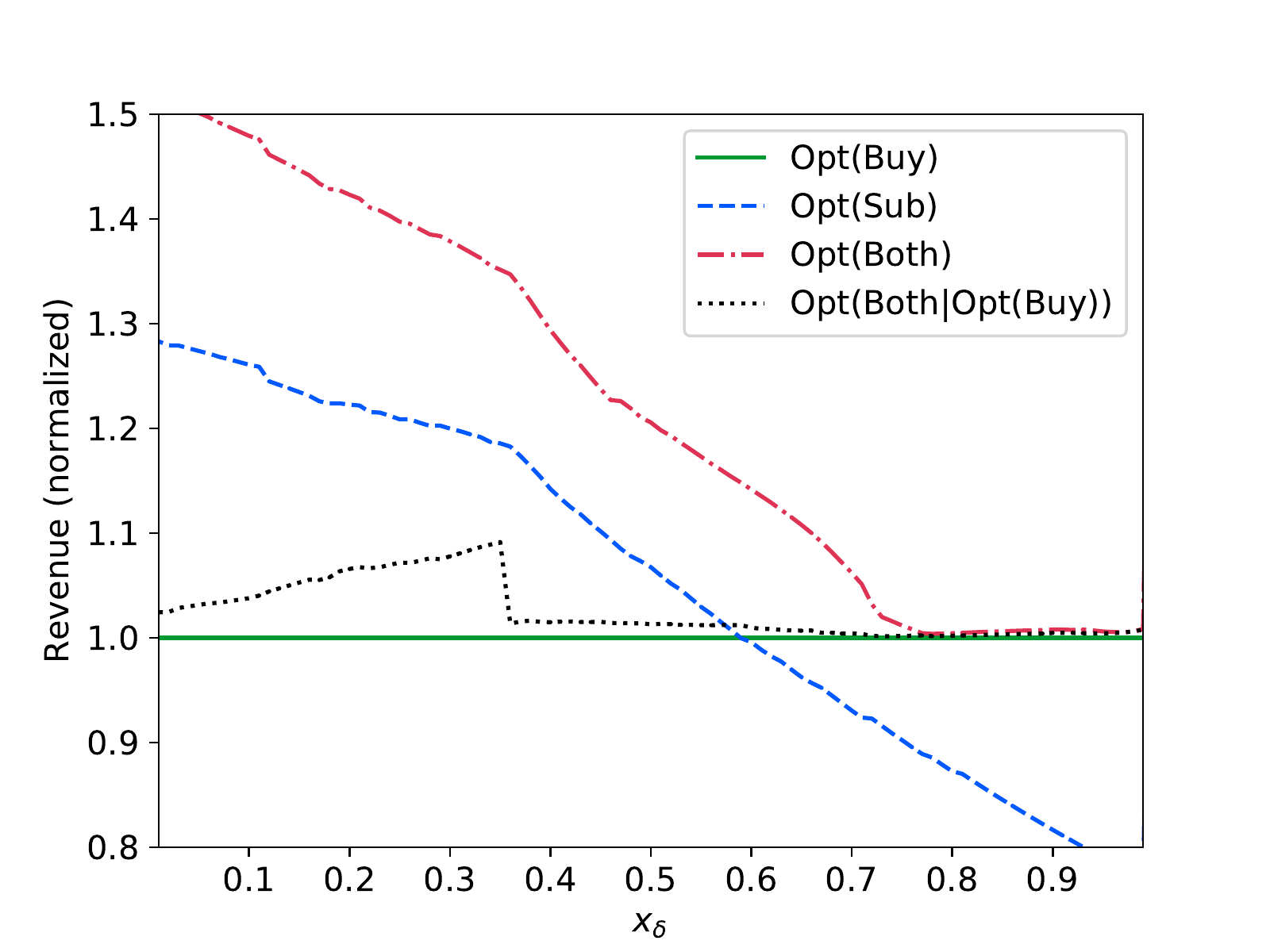}
                \caption{Revenue for different long-term\newline engagement distributions (i.e., varying  $x_\delta$)}
                \label{FIG:vart}
        \end{minipage}%
        \begin{minipage}{0.50\textwidth}
                \includegraphics[width=0.99 \textwidth]{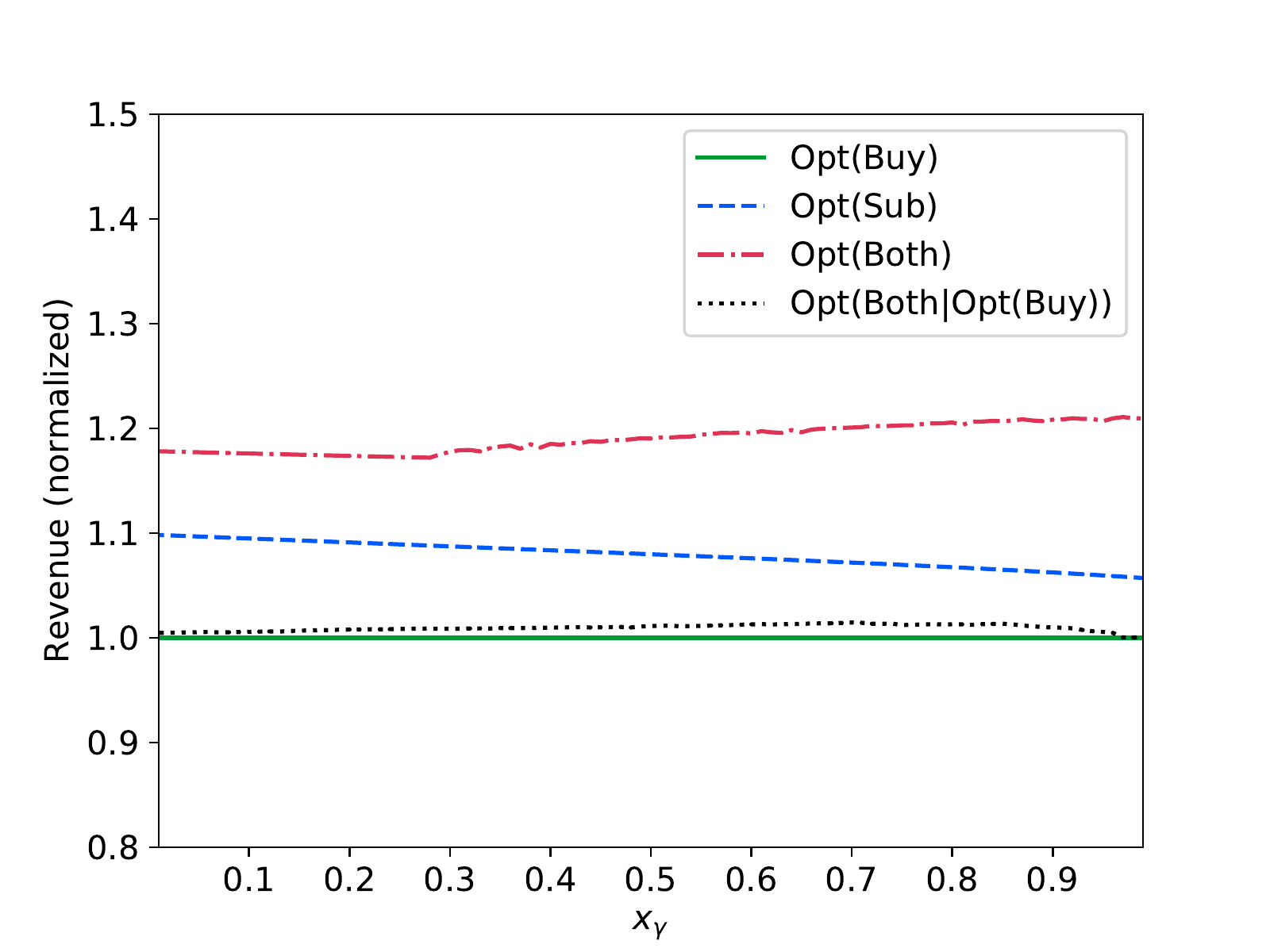}
                \caption{Revenue for different quality decay distributions (i.e., varying $x_\gamma$)}
                \label{FIG:varg}
        \end{minipage}%
       \end{figure}
In Figure \ref{FIG:vart}, we see that the  subscription option is best when the spread between short-term and long-term users is largest (i.e., $x_\delta$ is small), as subscription inherently differentiates users on how long they are interested in the product. For very high $x_\delta$, i.e., when every user in expectation has demand for many timesteps, the potential revenue gain of offering a subscription alongside a buy option becomes very small (on the magnitude of $0.5\%$). This is because at that point, too many users stop subscribing despite still having demand because their perceived value decayed below the subscription price. This makes subscription options inherently unattractive and any subscription price that could attract a large amount of additional users would cause too much market cannibalization.  Interestingly, for low $x_\delta$ ($<0.36$ when only offering a buy option and $<0.46$ when offering both options), it is optimal to give out the upgrade for free (but make the base product more expensive). This is whats causes the sudden drop in attainable revenue increase for adding a subscription option without changing the buy price, as users with relatively high quality decay that were priced out of buying before that point (and thus were willing to get a relatively expensive subscription for 1 or 2 time steps) afterwards switch to only buying the base game. 

In Figure \ref{FIG:varg}, we see that while increasing $x_\gamma$, and therefore decreasing the population variance of the quality decay factor $\gamma$, decreases the relative revenue potential of just offering a subscription, it can still increase the revenue potential of offering both options as market cannibalization decreases. 
       \begin{figure}[h!]
        \centering%
        \begin{minipage}{0.50\textwidth}
                \includegraphics[width=0.99 \textwidth]{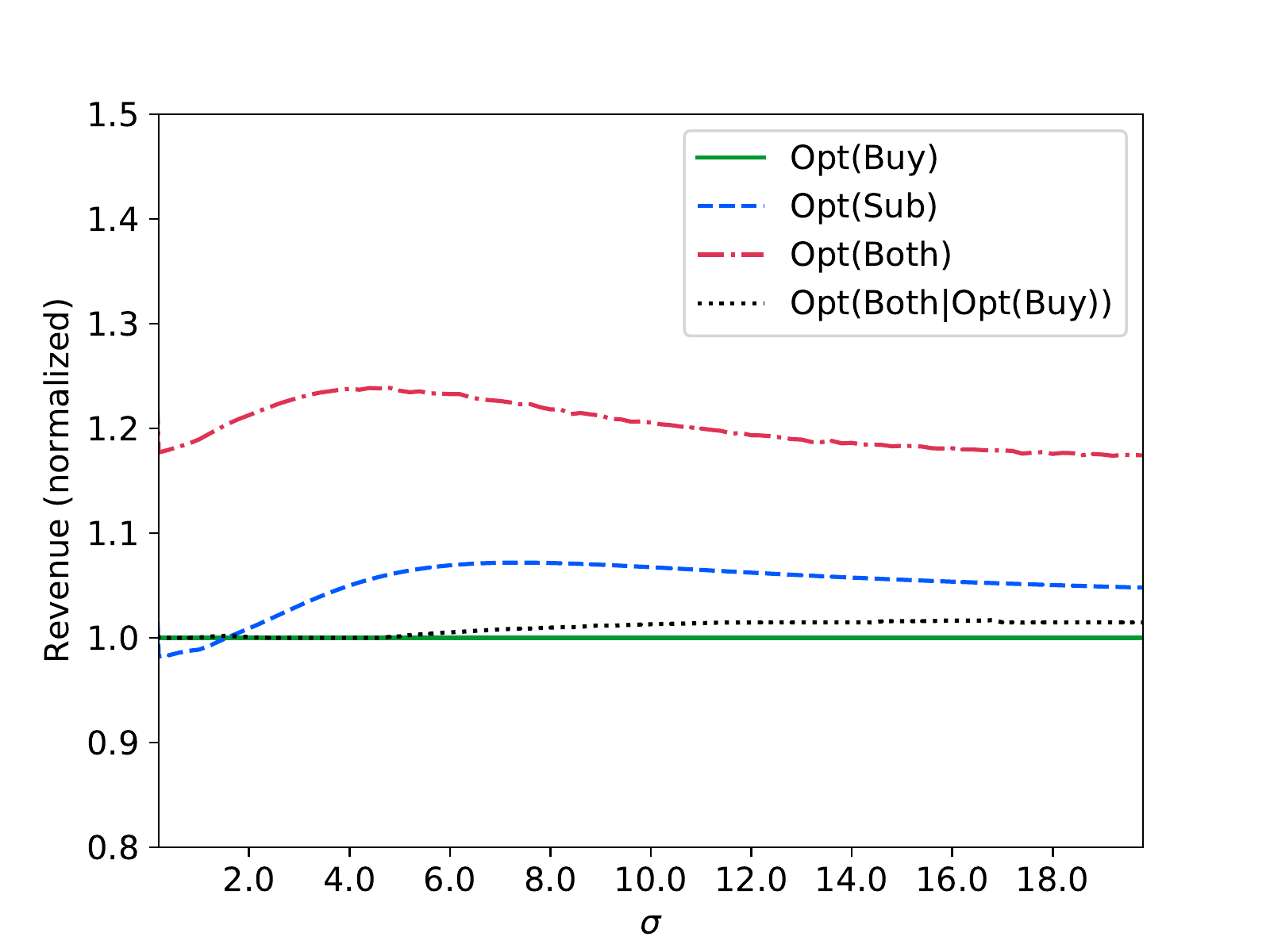}
                \caption{Revenue for different value\newline distributions (i.e., varying $\sigma$)}
                \label{FIG:varv}
        \end{minipage}%
        \begin{minipage}{0.50\textwidth}
                \includegraphics[width=0.99 \textwidth]{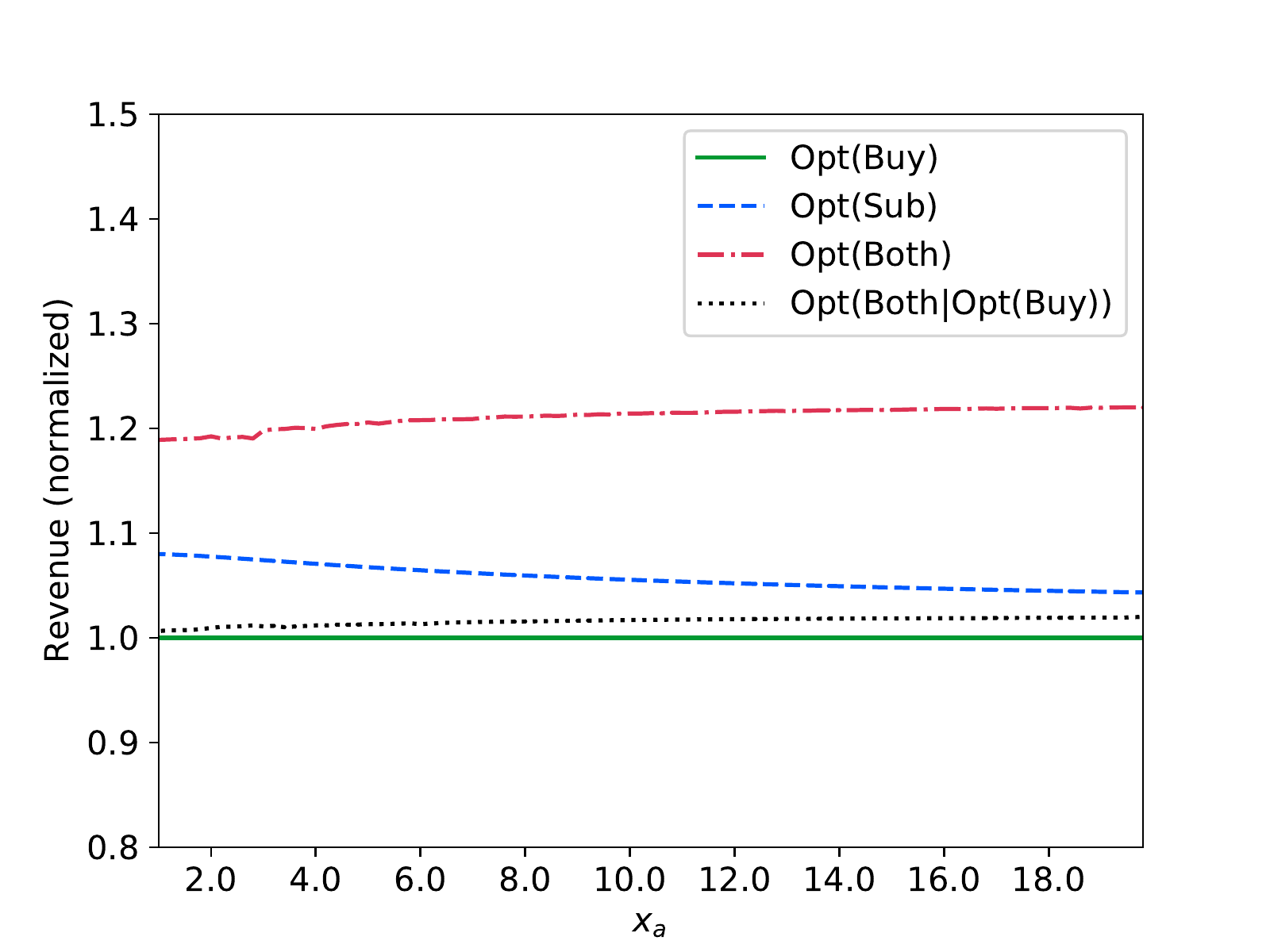}
                \caption{Revenue for different arrival\newline distributions (i.e., varying $x_a$)}
                \label{FIG:vara}
        \end{minipage}%
       \end{figure}
       
In Figure \ref{FIG:varv}, we see that the variance of the user values for the most part has a relatively low impact on the relative revenue potential of the different strategies. But since the optimal price when only offering a buy option for with the given parameters is low enough that users with average valuation buy even if the have a low long-term engagement factor $\delta$ and decay factor $\gamma$, the potential revenue advantage of offering a subscription option without changing buy prices goes to zero with low $\sigma$. 

    In Figure \ref{FIG:vara}, we see that how many users arrive in timestep $1$ also has a relatively low impact on the relative revenue potential of the different strategies. Similar to low $\sigma$, we again see that for very low $x_a$, the potential revenue improvement of offering a subscription option without changing the buy price goes to zero. Here this is caused by the fact that the buy price keeps decreasing because late arriving users become relatively more important for the publisher's revenue, making it harder and harder to offer a reasonably prices subscription option without loosing revenue to market cannibalization. 
       
\subsection{Correlating value and long term demand}
  \begin{figure}[h!]
        \centering%
        \begin{minipage}{0.50\textwidth}
                \includegraphics[width=0.99 \textwidth]{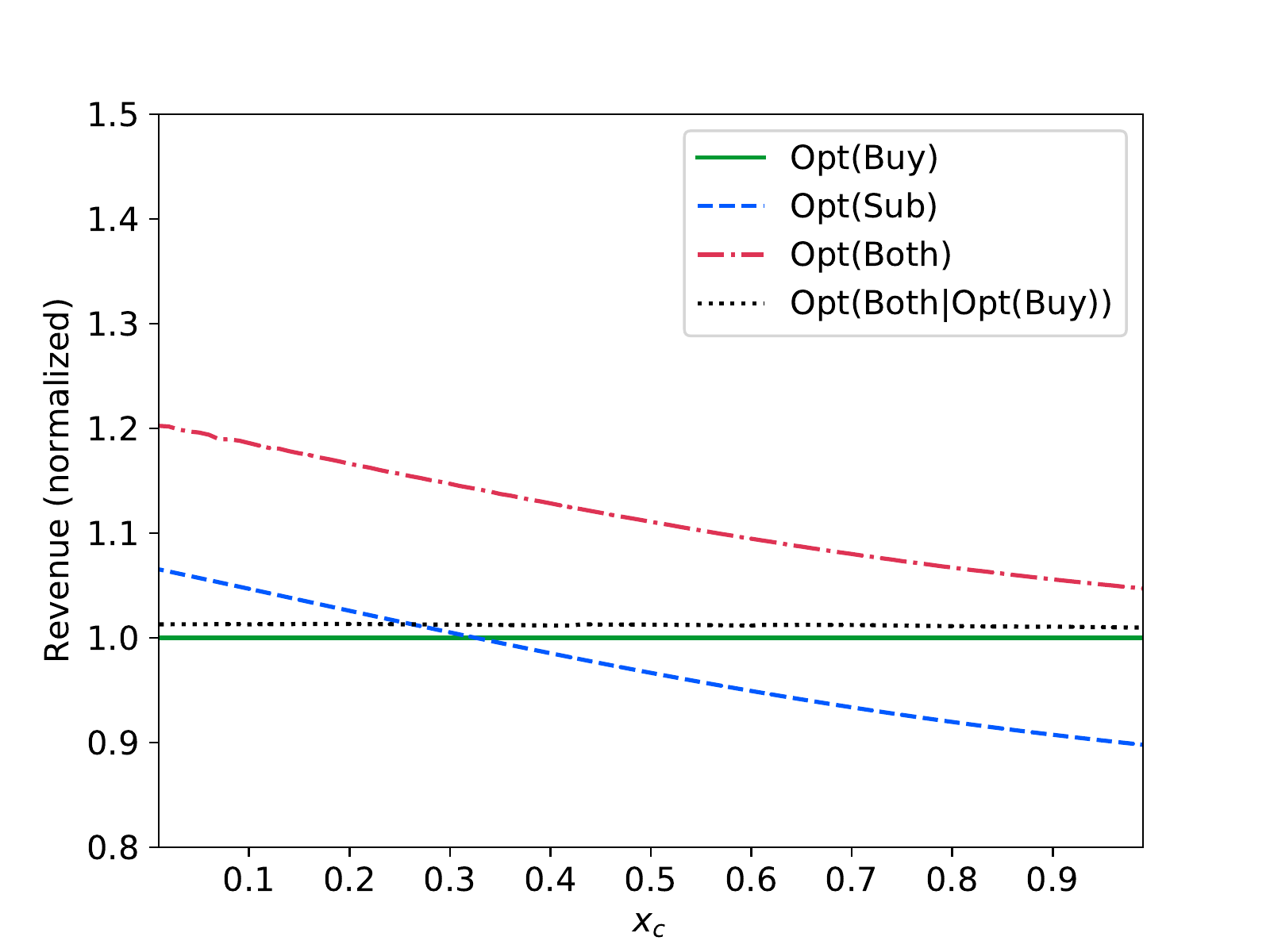}
                \caption{Revenue for different levels of\newline correlation (i.e., varying $x_c$)}
                \label{FIG:varc}
        \end{minipage}%
  \end{figure}
So far all type variables were assumed to be independent from each other. In practice it is likely that some (but not all) users with high long term engagement factor $\delta$ have lower values $v$, for example because they do not have much leisure time to use the game in each timestep and therefore need to own it for longer to spend the same time using it. In this example we want to study how introducing some correlation between $\delta$ and $v$ changes the publisher's revenue. To that end, we let the value distribution depend on $\delta$. For a dependence factor of $x_c$ set the value distribution for a users with long-term engagement factor $\delta$ to a normal distribution with mean $\mu= 25((1-x_c)+x_c(1-\delta))$ and standard deviation $\sigma = 10$, again truncated to $\left[0,50\right]$. 
As we can see in Figure \ref{FIG:varc}, while increasing the dependence between value and long-term engagement decreases the revenue potential of a subscription option, offering both options is still markedly better even when the mean of the normal distribution underlying a users value is fully dependent on the user's long-term engagement factor (i.e., $x_c = 1$). Interestingly, this dependence does not seem to have much, if any, effect on the revenue potential of introducing a subscription option without changing the buy prices.
\section{Conclusion}
We have analyzed the revenue maximization problem of a publisher wanting to either sell perpetual or subscription licenses for consumer software. 
In conclusion, combining subscription and perpetual license is typically revenue optimal when selling consumer software, realistically increasing revenue by $10\%-20\%$ over only offering perpetual license. Offering both types of licenses, it is often further possible to combine a revenue increase compared to only offering perpetual licenses with a Pareto improvement for the users, though the resulting revenue increase is then only on the magnitude of $1\%-2\%$. 


%
%
%
 \bibliographystyle{splncs04}
\bibliography{wine}

\begin{thebibliography}{10}
\providecommand{\url}[1]{\texttt{#1}}
\providecommand{\urlprefix}{URL }
\providecommand{\doi}[1]{https://doi.org/#1}

\bibitem{burns2016sensitivity}
Burns, Z., Roseboom, I., Ross, N.: The sensitivity of retention to in-game
  advertisements: An exploratory analysis. In: Twelfth Artificial Intelligence
  and Interactive Digital Entertainment Conference (2016)

\bibitem{chen2019loot}
Chen, N., Elmachtoub, A.N., Hamilton, M., Lei, X.: Loot box pricing and design.
  In: Proceedings of the 21st ACM Conference on Economics and Computation. p.
  291–292. EC ’20, Association for Computing Machinery, New York, NY, USA
  (2020)

\bibitem{chen2019efficacy}
Chen, Y., Farias, V.F., Trichakis, N.: On the efficacy of static prices for
  revenue management in the face of strategic customers. Management Science
  \textbf{65}(12),  5535--5555 (2019)

\bibitem{desai2001quality}
Desai, P.S.: Quality segmentation in spatial markets: When does cannibalization
  affect product line design? Marketing Science  \textbf{20}(3),  265--283
  (2001)

\bibitem{gallego1994optimal}
Gallego, G., Van~Ryzin, G.: Optimal dynamic pricing of inventories with
  stochastic demand over finite horizons. Management science  \textbf{40}(8),
  999--1020 (1994)

\bibitem{eu4}
Gamer, P.: Paradox is testing a subscription service for europa universalis 4
  (2020),
  \url{https://www.pcgamer.com/paradox-is-testing-a-subscription-service-for-europa-universalis-4/}

\bibitem{trackmania}
Gamer, P.: Ubisoft clarifies that trackmania is subscription-based (2020),
  \url{https://www.pcgamer.com/ubisoft-says-trackmania-is-not-subscription-based-you-just-pay-for-it-multiple-times/}

\bibitem{moorthy}
Moorthy, K.S.: Market segmentation, self-selection, and product line design.
  Marketing Science  \textbf{3}(4),  288--307 (1984)

\bibitem{mussa}
Mussa, M., Rosen, S.: Monopoly and product quality. Journal of Economic Theory
  \textbf{18}(2),  301 -- 317 (1978)

\bibitem{newzoo}
newzoo: Global games market report (2019),
  \url{https://newzoo.com/insights/trend-reports/newzoo-global-games-market-report-2019-light-version/}

\bibitem{rohitratana2012impact}
Rohitratana, J., Altmann, J.: Impact of pricing schemes on a market for
  software-as-a-service and perpetual software. Future Generation Computer
  Systems  \textbf{28}(8),  1328--1339 (2012)

\bibitem{storn1997differential}
Storn, R., Price, K.: Differential evolution--a simple and efficient heuristic
  for global optimization over continuous spaces. Journal of global
  optimization  \textbf{11}(4),  341--359 (1997)

\bibitem{superdata}
SuperData: 2019 year in review digital games and interactive media (2020),
  \url{https://www.superdataresearch.com/2019-year-in-review}

\bibitem{wang2019pricing}
Wang, R., Dada, M., Sahin, O.: Pricing ancillary service subscriptions.
  Management Science  \textbf{65}(10),  4712--4732 (2019)

\end{thebibliography}
\newpage
\appendix
\section{Reward and payment with the potentially optimal strategies}\label{AP1}
In this section we derive expressions for  the expected reward and payment for any of the potentially optimal strategies that can be evaluated relatively cheaply. 
Towards this, we first derive the following technical lemma. 

\begin{lemma}\label{LEM:3}
        Assume a user of type $\tau$ with arrival time $n_a<m$ that either, beginning in timestep $n'= n_a$, subscribes in any timestep before $n''\leq m$ and takes no other action or buys the product in timestep $n'=n_a$ (in which case $n''= m$). The expected normalized reward such a user obtains \emph{before timestep $m$} is given by \begin{align}
        w^{<m}(n', n'', o_1) &= q_1 (\gamma^{n'}\frac{1- (l t \gamma)^{max(0,n''-n')}}{1- l t \gamma}+\gamma^{n''} o_1 \frac{1- (l t \gamma)^{m-n'') }}{1- l t \gamma}) 
        \end{align}
        The probability that such a user  has demand $d=1$ in timestep $m$ is given by 
        \begin{align}
        \kappa(n',n'',o)=
        \delta^{n''-n'}+ \delta(1-\delta^{n''-n'})
        \end{align}
        Similarly, assume a user of type $\tau$ with value $v=1$ and ownership vector $o$ that has demand in timestep $n' = max(n_a,m)$ and subscribes from $n'$ in any timestep before $n''\geq m$ and takes no other action. The expected value such a user obtains after timestep $m$ is given by        
        \begin{align}   
        w^{\geq m}(n', n'', o) =& (\gamma^{n'} q_1+ \gamma^{n'-m}q_2) \frac{1- (l t \gamma)^{max(0,n''-n')}}{1- l t \gamma}\\
        &+ 
        (\gamma^{n''} q_1 o_1 + \gamma^{n''-m}q_2 o_2)\frac{1}{1- l t \gamma}
        \end{align} 
        
        A user of type $\tau$ that has demand $d=1$ in timestep $n'$ and subscribes from $n'$ in any timestep before $n''\geq m$ where he still has demand and takes no other action makes an expected payment of 
        \begin{align}   
        \rho(n', n'') =& p_S \frac{1- (l t )^{n''-n'}}{1- l t}
        \end{align} 
\end{lemma}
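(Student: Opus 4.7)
The plan is to verify each of the four stated formulas by direct computation from the definitions, relying on linearity of expectation to decompose the relevant expected rewards and payments into per-timestep contributions, and then applying closed-form summation of geometric progressions. The key observation is that in each of the stated scenarios the user actively uses the product in every timestep of a clearly delineated window, so his demand-survival probability at timestep $n$ is exactly $\delta^{n - n_{\text{start}}}$; multiplying this by the per-timestep reward $\gamma^n q_i$ produces a geometric series in $\gamma\delta$, while multiplying by the flat per-timestep payment $p_S$ produces a geometric series in $\delta$ alone.

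For $w^{<m}(n', n'', o_1)$ I would split the horizon $[n', m-1]$ at the cut-off $n''$. On the first sub-window $[n', n''-1]$ the user collects base-product quality $q_1 \gamma^n$ (either via subscription or, in the buy branch where $n''=m$, via ownership) whenever he still has demand, and the summation $\sum_{n=n'}^{n''-1} q_1 \gamma^n \delta^{n-n'}$ produces the first stated term in closed form. On the second sub-window $[n'', m-1]$ the user collects the same reward only when $o_1 = 1$, and a shifted geometric summation starting at $n''$ yields the second term weighted by $o_1$; the $\max(0, \cdot)$ guard in the exponent absorbs the $n'' \geq m$ case where the second window is empty. For $\kappa(n', n'', o)$ I would then condition on whether demand survives the $n''-n'$ active timesteps. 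With probability $\delta^{n''-n'}$ it does and remains intact through $m$ (since, given $n'' \leq m$, no further active timesteps follow before the upgrade release); with the complementary probability the upgrade at $m$ rekindles demand with probability $\delta$. Summing gives the stated formula.

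For $w^{\geq m}(n', n'', o)$ and $\rho(n', n'')$ the arguments are structurally identical. During the subscription window $[n', n''-1]$ the user has full access and collects the sum of realized qualities $q_1 \gamma^n + q_2 \gamma^{n-m}$ per active timestep, paying $p_S$ per subscribed timestep in which he still has demand; geometric summation in $\gamma\delta$ (for rewards) and in $\delta$ (for payments) produces the first fraction in $w^{\geq m}$ and the entirety of $\rho$. From $n''$ onward the user obtains reward only from his owned components $o$, and an infinite geometric tail in $\gamma\delta$ produces the $1/(1-\gamma\delta)$ factor that, weighted by the appropriate $o_i$, completes $w^{\geq m}$.

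The main obstacle is the careful bookkeeping of when the user is actively using the product, since only used timesteps cause demand to decay; one must verify in each scenario that every timestep of the relevant window counts, so that the clean $\delta^{n-n'}$ probability applies and the sums collapse geometrically. A secondary delicate point is the treatment of the transition at $n''$, in particular whether the survival probability up to $n''$ is folded into the tail sum or treated separately, which must be made to match the paper's convention for how these quantities feed into the downstream utility expressions used in Lemmas~\ref{LEM:1} and~\ref{LEM:2}.
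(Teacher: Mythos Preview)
Your proposal is correct and follows essentially the same route as the paper: both arguments split the horizon at $n''$, use $P(d_n=1)=\delta^{\,n-n'}$ for the survival probability during the active window, and then collapse the resulting per-timestep sums as (partial) geometric series in $\gamma\delta$ for rewards and in $\delta$ for payments, with $\kappa$ obtained by conditioning on whether demand survives to $n''$ and then applying the rekindle probability $\delta$. The paper carries out the first summation explicitly and then declares the remaining cases analogous, exactly as you outline.
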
     

\begin{proof}

        Recall that the normalized reward of a user of type $\tau$ that follows some strategy $\alpha$ is given by
        \begin{align}w(\alpha,\tau)
        =& \sum_{n=n_a}^\infty \sum_{\sigma'} P(\sigma_n= \sigma'|\alpha, \tau) w_n(S,b,\tau, \sigma')
        .\end{align} 
        
        For a user with ownership vector $o$ that subscribes from $n'= n_a$ to $n''-1$ and takes no other action (i.e., $o$ never changes), the expected normalized reward before timestep $m$ is consequently given by 
        \begin{align}w^{<m}(n', n'', o_1)
        =& \sum_{n=n_a}^{m-1} \sum_{\sigma'} P(\sigma_n= \sigma'|(n', n''), \tau) w_n(S,b,\tau, \sigma')\\
        =& \sum_{n=n_a}^{m-1}  P(d_n= 1|\alpha, \tau) w_n(S,b,\tau, (1, o))
        .\end{align} 
        For any timestep $n$ in which the user subscribes, the probability to loose demand is $1-\delta$ and it follows 
        \begin{align}\sum_{n=n_a}^{n''-1}  P(d_n= 1|\alpha, \tau) w_n(S,b,\tau, (1, o))
        =&\sum_{n=n_a}^{n''-1}  \delta^{n-n_a} w_n(S,b,\tau, (1, o))\\
        &= \sum_{n=n_a}^{n''-1}  \delta^{n-n_a} \gamma^{n} q_1\\
        &= \sum_{n=n_a}^{n''}  (\gamma \delta)^{n-n_a} \gamma^{n_a} q_1\\
        &= \gamma^{n_a} q_1 \frac{1-(\gamma \delta)^{n''-n_a}}{1-\gamma \delta}
        .\end{align} 
        Here, the last equality follows as a partial geometric series. Analogously, taking into account that for $o_1=0$ no more value is obtained,  for the remaining timesteps it holds  
        \begin{align}\sum_{n=n''}^{m-1}  P(d_n= 1|\alpha, \tau) w_n(S,b,\tau, (1, o))
        &=  \gamma^{n''} o_1 q_1 \frac{1-(\gamma \delta)^{m-n''}}{1-\gamma \delta}
        .\end{align} 
        and the statement for $w^{<m}(n', n'', o_1)$ follows. 
        
        The probability that such a user still has demand in timestep $n''$ is given by $\delta^{n''-n'}$. If he does, then he also still has demand in timestep $m$. If he on the other hand does not have demand anymore in timestep $n''$, then he still has a probability of $\delta$ to regain demand with the upgrade release and price change in timestep $m$. 
        The statement for $w^{\geq m}(n', n'', o)$ follows analogously to the statement for $w^{< m}(n', n'', o_1)$.
        
\end{proof}

For all potentially optimal $\alpha$, we can now easily derive a user's reward and payment. 

\begin{proposition}\label{prop:rewpay}
        For strategy $\alpha = (\alpha_1,\alpha_2)$ with $\alpha_1 \in \left\lbrace b,s\right\rbrace$, $\alpha_2 \in \left\lbrace b,s, b_b\right\rbrace$, the normalized expected reward for playing $\alpha$ is
        \begin{align}   
        w((b,b),\tau)=& \begin{cases}
        \psi_1(n_a, n_a, 1) + \kappa(n_a,n_a, 1) \psi_2(m, m, [1,1])  & \;\;\; \text{if } n_a<m \\
        \psi_2(n_a, n_a, [1,1]) & \;\;\; \text{if } n_a\geq m
        \end{cases}     \\                                                                      
        w((b,s),\tau)=  &\begin{cases} 
        \psi_1(n_a, n_a, 1) + \kappa(n_a,n_a,1) \psi_2(m, n_2^{1,\tau}, [1,0]) & \;\;\; \text{if } n_a<m \\
        \psi_2(n_a, n_2^{1,\tau}, [1,0])&\;\;\; \text{if } n_a\geq m
        \end{cases}     \\
        w((s,s),\tau)   =&\begin{cases}
        \psi_1(n_a, n_1, 0)  + \kappa(n_a,n_1^\tau,0) \psi_2(m, n_2^{0,\tau}, [0,0])& \;\;\; \text{if } n_a<m \\
        \psi_2(n_a, n_2^{0,\tau}, [0,0])&\;\;\; \text{if } n_a\geq m
        \end{cases}     \\                                                                                                                              
        w((s,b),\tau)   =& \begin{cases} 
        \psi_1(n_a, n_1^\tau, 0) + \kappa(n_a,n_1^\tau,0) \psi_2(m, m, [1,1])& \;\;\; \text{if } n_a<m \\
        \psi_2(n_a, n_a, [1,1])&\;\;\; \text{if } n_a\geq m
        \end{cases}     \\
        w((s,b_b))      =&\begin{cases} 
        \psi_1(n_a, n_1^\tau, 0) +   \kappa(n_a,n_1^\tau,0) \psi_2(m, n_3^\tau, [1,0])& \;\;\; \text{if } {n_a}<m \\
        \psi_2(m, n_3^\tau, [1,0])
        &\;\;\; \text{if } n_a\geq m
        \end{cases}                                             
        \end{align}

        The  expected payments are
        \begin{align}   
        \rho((b,b),\tau)=& \begin{cases}
        p_1^{<m}+ \kappa(n_a,n_a,1) p_2& \;\;\; \text{if } n_a<m \\
        p_1^{\geq m}+p_2&\;\;\; \text{if } n_a\geq m
        \end{cases}             \\                                                                      
        \rho((b,s),\tau)=       &\begin{cases} p_1^{<m}+ \kappa(n_a,n_a,1)\rho_S(n_m, n_2^{1,\tau})& \;\;\; \text{if } n_a<m \\
        p_1^{\geq m}+\rho_S(n_a, n_2^1)&\;\;\; \text{if } n_a\geq m
        \end{cases}     \\
        \rho((s,s),\tau)        =& \begin{cases}  \rho_S(n_a, n_1^\tau) +\kappa(n_a,n_1^\tau,0) \rho_S(m, n_2^{0,\tau})
        & \;\;\; \text{if } n_a<m \\
        \rho_S(n_a, n_2^0)&\;\;\; \text{if } n_a\geq m
        \end{cases}             \\                                                                                                                      
        \rho((s,b),\tau)        =& \begin{cases}   \rho_S(n_a, n_1^\tau)+\kappa(n_a,n_1^\tau,0) (p_1^{\geq m}+p_2)& \;\;\; \text{if } n_a<m \\
        p_1^{\geq m}+ p_2&\;\;\; \text{if } n_a\geq m
        \end{cases}             \\      
        \rho((b,b_b),\tau)      =& \begin{cases} \rho_S(n_a, n_1^\tau) +\kappa(n_a,n_1^\tau,0)  \left(\rho_S(m, n_3^\tau)+\delta^{n_3^\tau-m} p_1^{\geq m} \right)& \text{if } n_a<m \\
        \rho_S(n_a, n_3^\tau)+\delta^{n_3^\tau-n_a} p_1^{\geq m}& \text{if } n_a\geq m
        \end{cases}                                                             
        \end{align}
\end{proposition}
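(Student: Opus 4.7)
The plan is to use Lemma \ref{LEM:3} as the main computational tool and to proceed by a direct case analysis on the five potentially optimal strategies $(\alpha_1,\alpha_2)\in\{b,s\}\times\{b,s,b_b\}$. For each strategy I would split the expected reward and payment into a pre-$m$ contribution and a post-$m$ contribution, since both the price regime and the set of available actions change at $m$. The pre-$m$ piece is delivered directly by $w^{<m}$ of Lemma \ref{LEM:3}, and the post-$m$ piece by $w^{\geq m}$, weighted by the probability $\kappa$ of carrying demand across the boundary at $m$ (also supplied by Lemma \ref{LEM:3}).

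First I would dispose of the easy subcase $n_a\geq m$, in which there is no pre-$m$ window: each of the five strategies reduces to a single ``subscribe-then-buy-or-not'' block starting at $n_a$, to which Lemma \ref{LEM:3} applies directly. For example, for $\alpha=(b,b)$ with $n_a\geq m$ the user buys both items at $n_a$ (by Lemma \ref{LEM:2}(1)) and takes no other action, so the reward collapses to $w^{\geq m}(n_a,n_a,[1,1])$ and the payment to $p_1^{\geq m}+p_2$. The four remaining strategies in this regime are handled identically by reading off the relevant subscription cutoff ($n_2^{o,\tau}$ or $n_3^\tau$) from Lemmas \ref{LEM:1}--\ref{LEM:2} and plugging into Lemma \ref{LEM:3}.

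Next I would treat $n_a<m$. The pre-$m$ behavior is fixed by $\alpha_1$: if $\alpha_1=b$ the user buys at $n_a$ and is otherwise inactive (contributing reward $w^{<m}(n_a,n_a,1)$ and payment $p_1^{<m}$); if $\alpha_1=s$ he subscribes until $n_1^\tau$ (contributing reward $w^{<m}(n_a,n_1^\tau,0)$ and payment $\rho_S(n_a,n_1^\tau)$). Conditional on still having demand at $m$---an event of probability $\kappa(n_a,n_a,1)$ in the first case and $\kappa(n_a,n_1^\tau,0)$ in the second---the post-$m$ continuation is exactly the $n_a=m$ instance of the same strategy applied to the virtual type $\tau'=(m,\delta,\gamma,v)$. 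Its expected reward and payment are therefore precisely the $n_a\geq m$ expressions derived in the previous paragraph with $n_a$ replaced by $m$. Weighting by $\kappa$ and adding the pre-$m$ contributions yields all the displayed formulas.

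The only non-routine point---and the place I would slow down---is the payment formula for $(s,b_b)$. The planned purchase of the base product at $n_3^\tau$ is executed \emph{only if} the user still has demand at $n_3^\tau$, which by the same demand-survival calculation used in Lemma \ref{LEM:3} happens with probability $\delta^{n_3^\tau-m}$ (or $\delta^{n_3^\tau-n_a}$ when $n_a\geq m$). Hence $p_1^{\geq m}$ must appear in the payment weighted by this factor, which matches the stated expression. The analogous weighting is already absorbed into the reward formulas because $w_n$ vanishes whenever $d=0$, so no extra factor is needed there. Everything else is a bookkeeping substitution of the timing parameters $n_1^\tau,n_2^{o,\tau},n_3^\tau$ from Lemmas \ref{LEM:1}--\ref{LEM:2} into the closed forms of Lemma \ref{LEM:3}.
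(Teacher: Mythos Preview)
Your proposal is correct and follows essentially the same approach as the paper: the paper's proof is a single sentence stating that the result follows by combining the strategy descriptions of Lemmas~\ref{LEM:1} and~\ref{LEM:2} with the closed-form building blocks of Lemma~\ref{LEM:3}, and your case analysis simply spells this out in detail. Your extra care with the $(s,b_b)$ payment (the $\delta^{n_3^\tau-m}$ survival factor on $p_1^{\geq m}$) is exactly the one place where a reader might stumble, and handling it explicitly is a genuine improvement over the paper's terse version.
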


\begin{proof}
        
        The statement follows by combining the actions taken under each strategy as described in Lemmas \ref{LEM:1} and \ref{LEM:2} with Lemma \ref{LEM:3} by noting that the player always either subscribes, owns all the available products he plans to buy with his strategy or does not have demand.
\end{proof}
\newpage
\section{Owners, Prices and Activity Data Examples}\label{AP2}
\subsection{Stellaris, a  4X grand strategy strategy game} 
This is a so called 4X (Explore, Expand, Exploit, Exterminate) grand strategy game with regular small  updates (which we do not model) and optional paid upgrades. During the observed timeframe, 4 large paid upgrades released, in October of 2016, April and September of 2017 , as well as frebruary of 2018. Each coincides with an up-tick in active players, though only the April upgrade seems to have lead to a large rise in sales. This most likely happened because it brought the quality of the whole product above a level where users that had held up on buying finally bought the base product.  
 \begin{figure}[h!]
        \centering%
        \begin{minipage}{0.5\textwidth}
                \includegraphics[width=0.99 \textwidth]{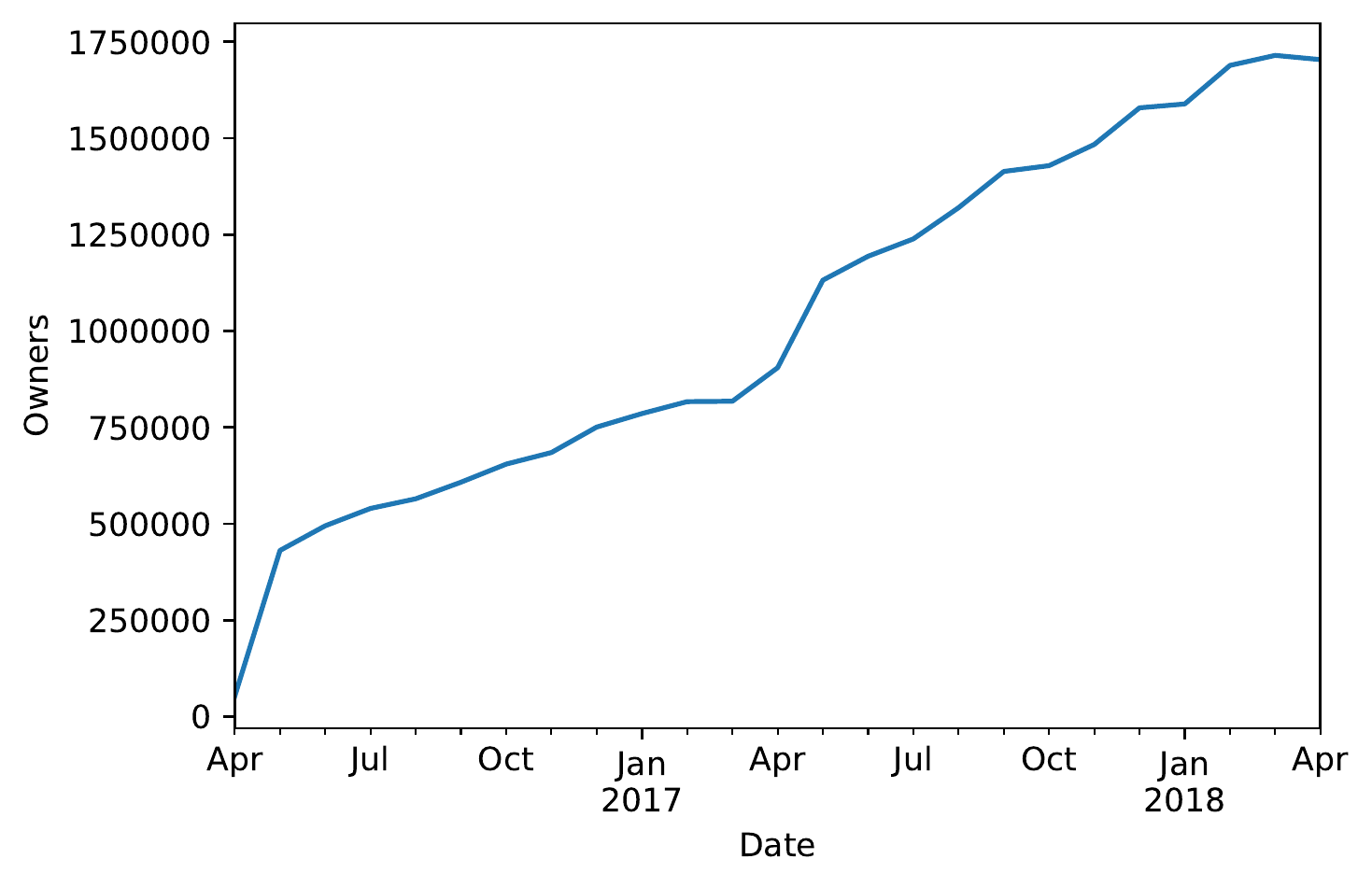}
                \caption{Stellaris:Owners over time}
                \label{FIG:stelown}
        \end{minipage}%
        \begin{minipage}{0.5\textwidth}
                \includegraphics[width=0.99 \textwidth]{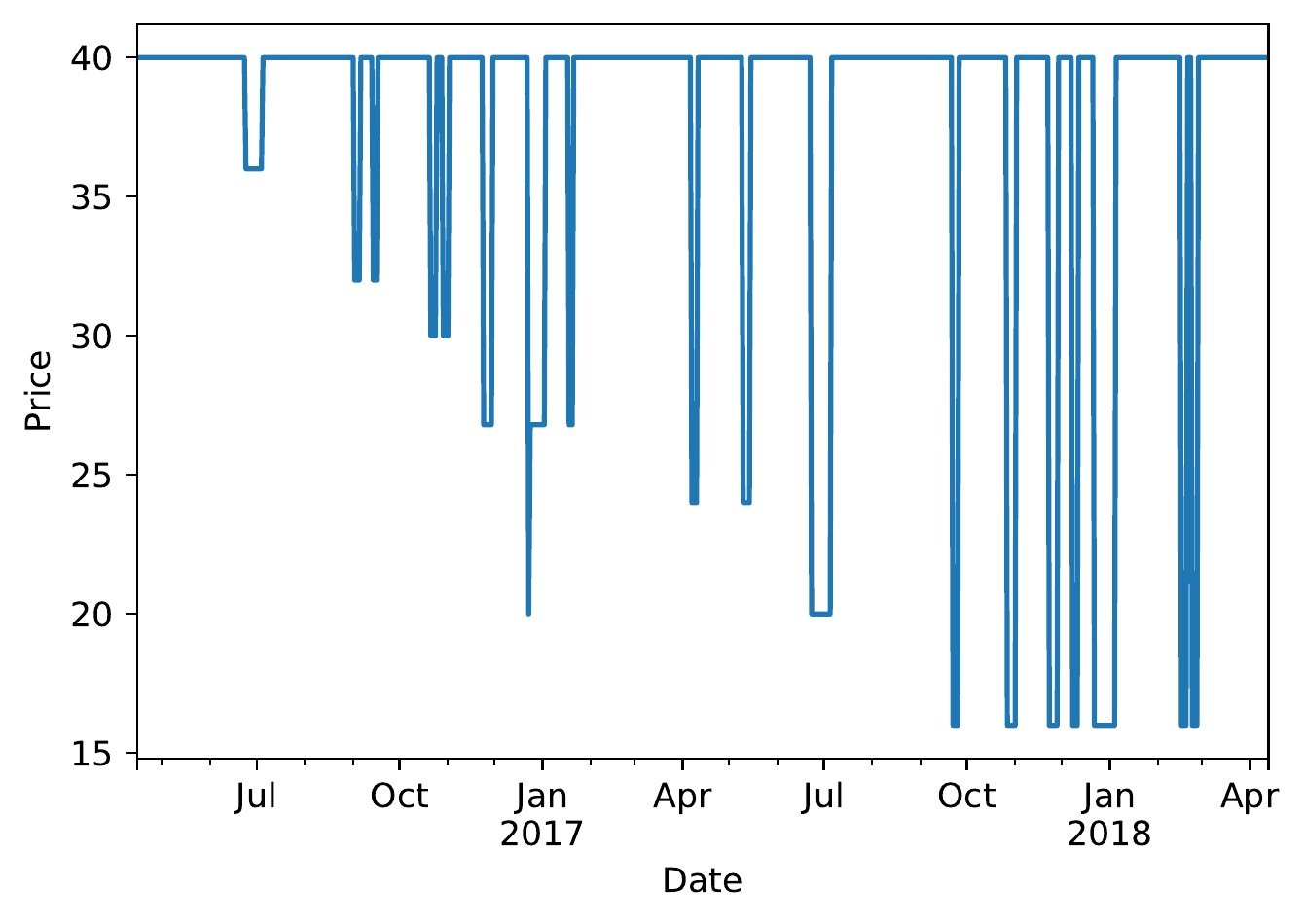}
                \caption{Stellaris:Price over time}
                \label{FIG:stelprice}
        \end{minipage}%
        
 \end{figure}
 \begin{figure}[h!]
        \centering%
        \begin{minipage}{0.5\textwidth}
                \includegraphics[width=0.99 \textwidth]{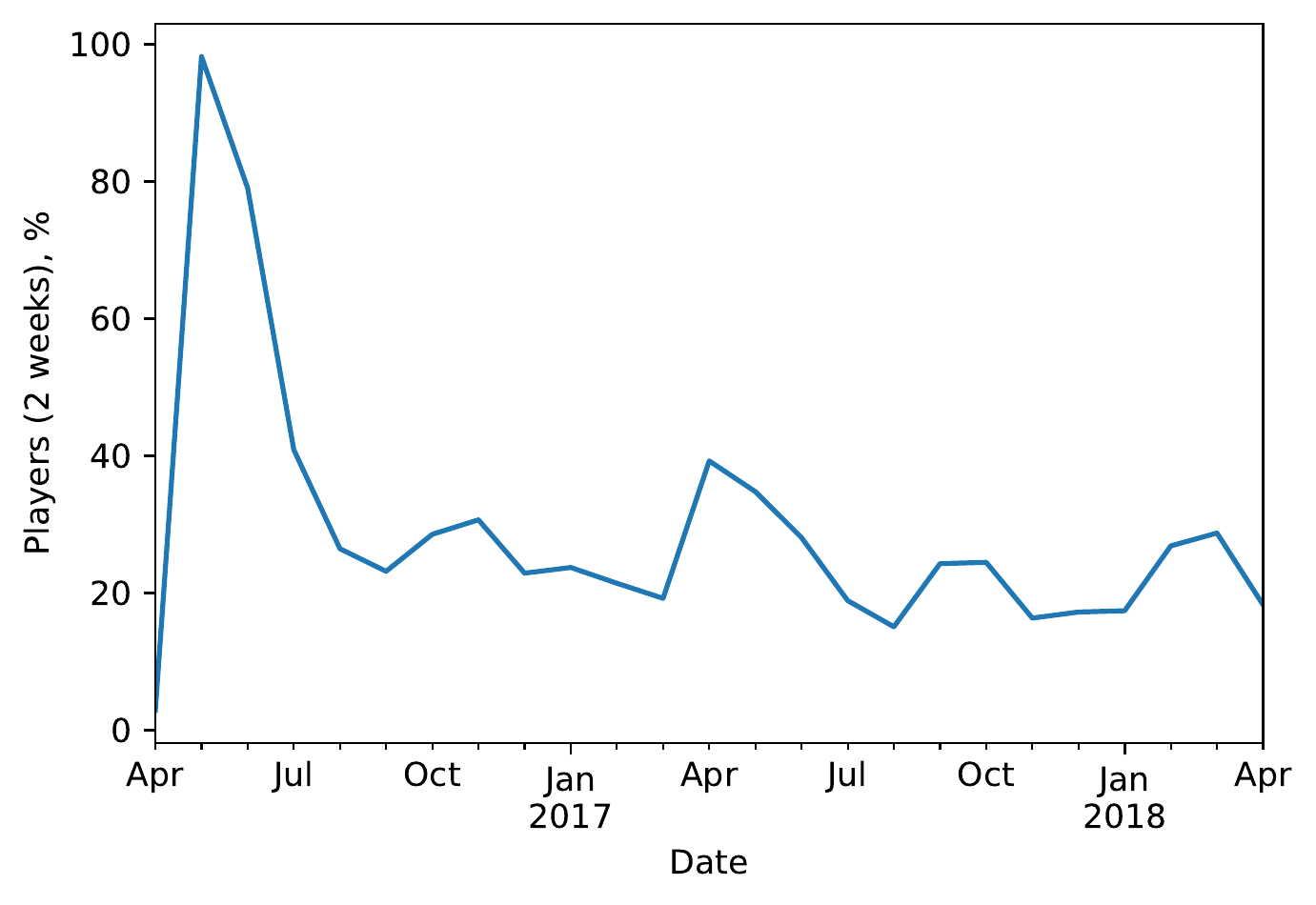}
                \caption{Stellaris: recently active owners in $\%$}
                \label{FIG:stelret}
        \end{minipage}%
 \end{figure}
 \newpage
\subsection{Dark Souls III, an action RPG}
This is a large, story driven game with a minor multi-player component. Famous for being very challenging. There were two paid upgrades released, one in October 2017 and one in March of 2017 that are both visible as up ticks in active players.

  \begin{figure}[h!]
        \centering%
        \begin{minipage}{0.5\textwidth}
                \includegraphics[width=0.99 \textwidth]{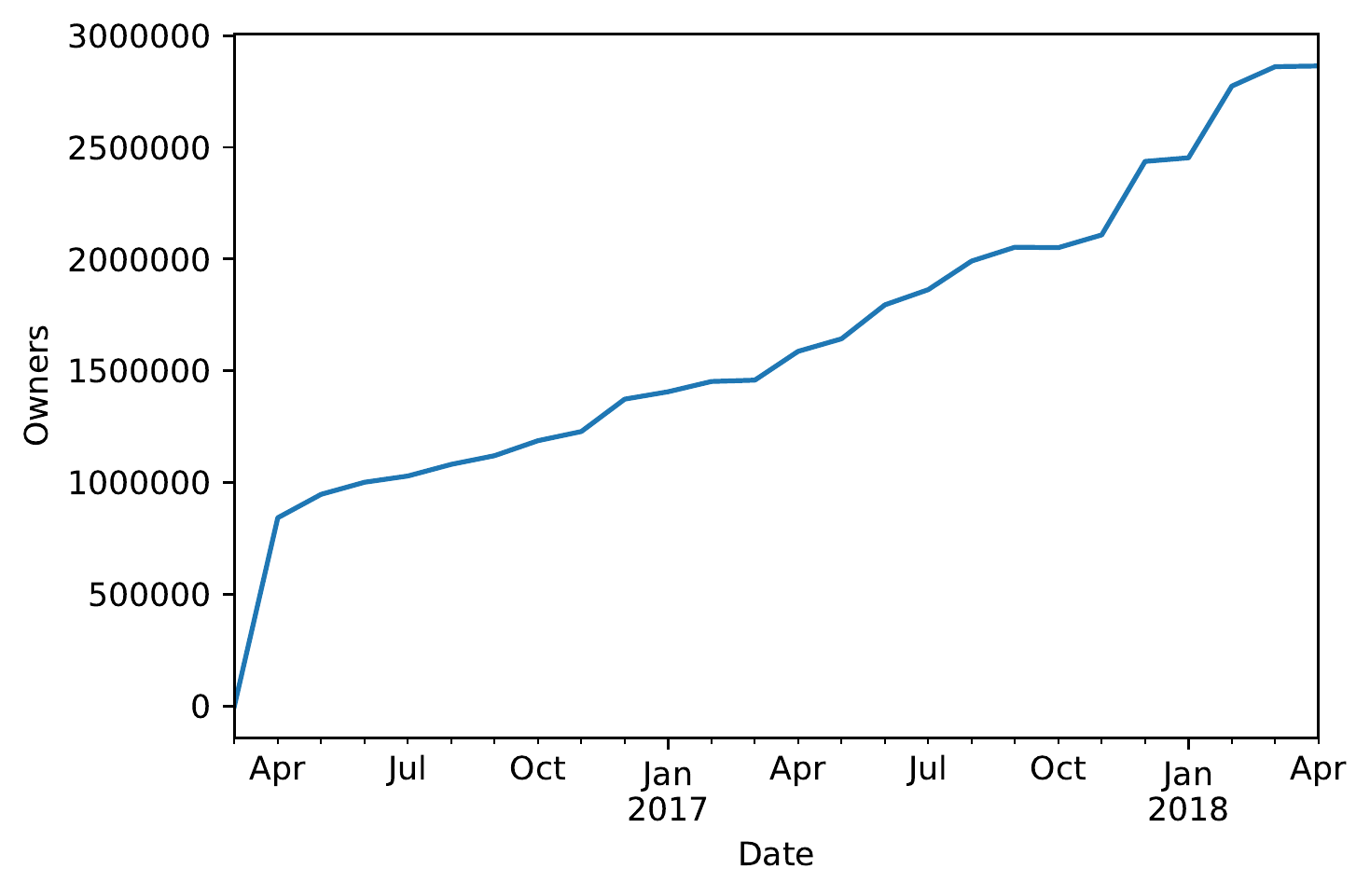}
                \caption{Dark Souls III:Owners over time}
                \label{FIG:DSown}
                \end{minipage}%
                \begin{minipage}{0.5\textwidth}
                        \includegraphics[width=0.99 \textwidth]{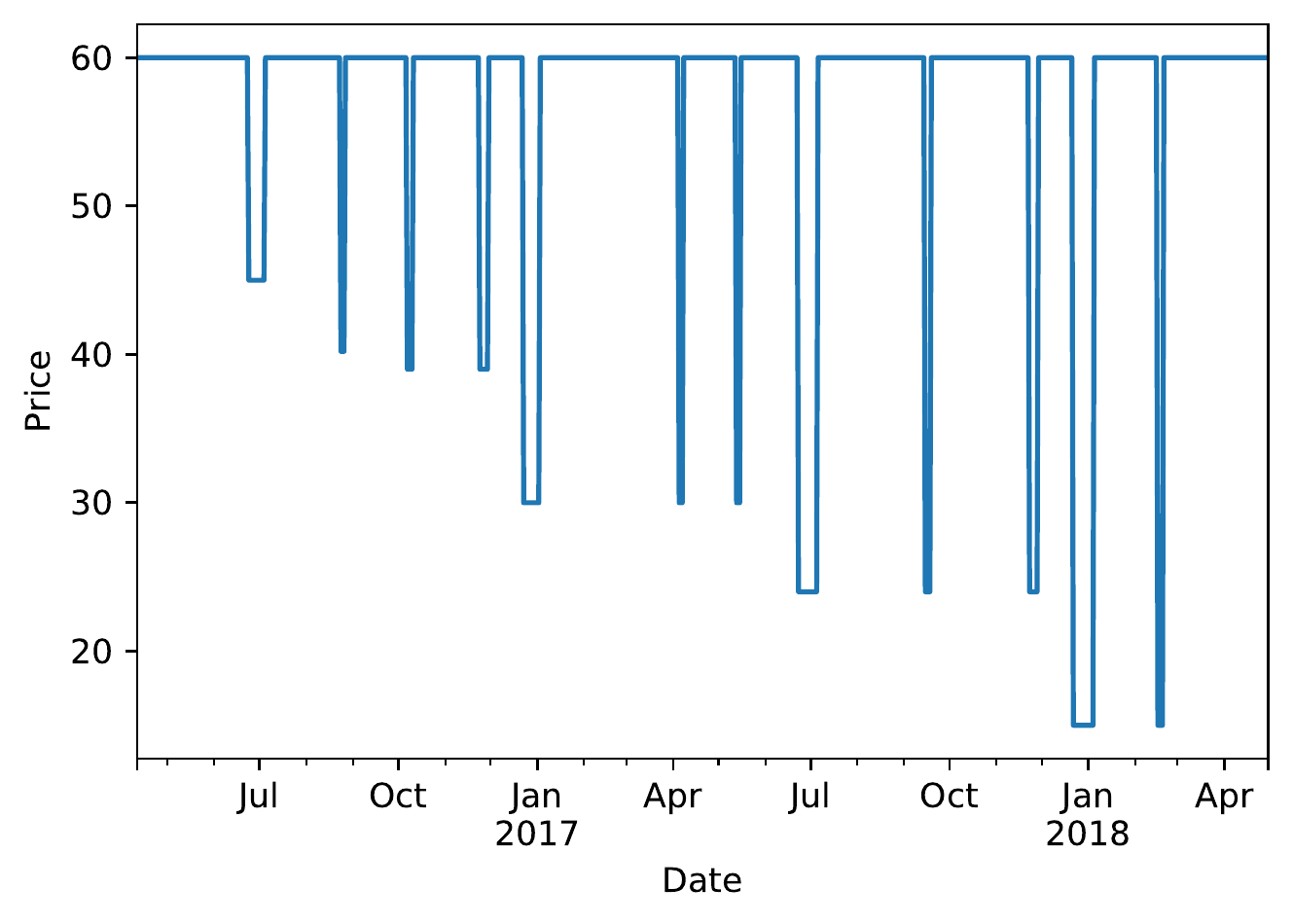}
                        \caption{Dark Souls III:Price over time}
                        \label{FIG:DSprice}
                        \end{minipage}%

        \end{figure}
  \begin{figure}[h!]
        \centering%
        \begin{minipage}{0.5\textwidth}
                \includegraphics[width=0.99 \textwidth]{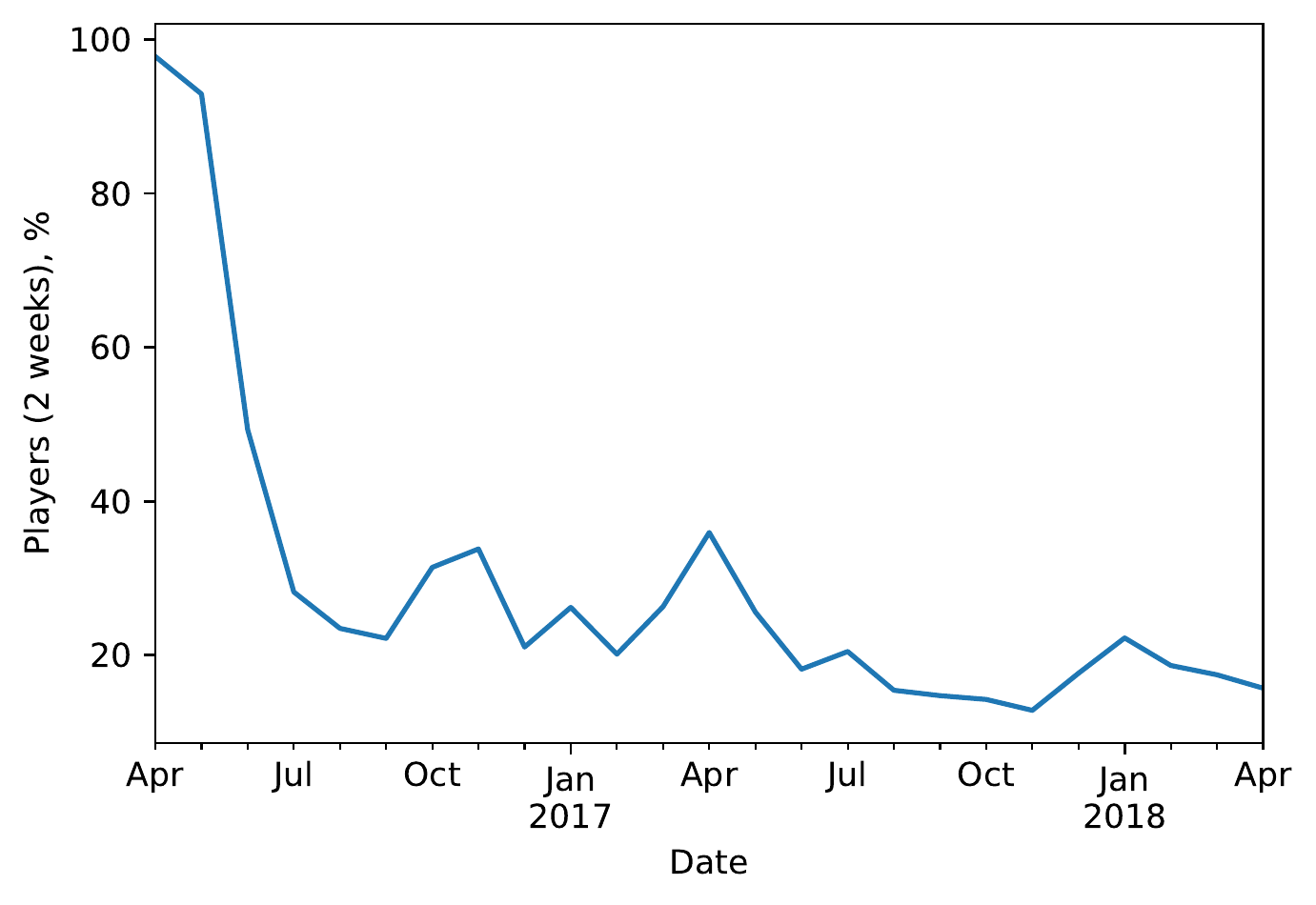}
                \caption{Dark Souls III: recently active owners in $\%$}
                \label{FIG:DSret}
        \end{minipage}%
\end{figure}
\newpage

\subsection{Slay the Spire, a rogue-like card deck building game}
This is a relatively small game that user typically play with low intensity, but for a long period of time. Additionally, in difference to the other two games, Slay the Spire had an so called 'early access' period during which an unfinished version was sold at a lower price while obtaining regular free updates. In our model, this in roughly comparable to \emph{increasing} (instead of decreasing) the base products price when the upgrade releases, but giving out the upgrade for free.  
  \begin{figure}[h!]
        \centering%
        \begin{minipage}{0.5\textwidth}
                \includegraphics[width=0.99 \textwidth]{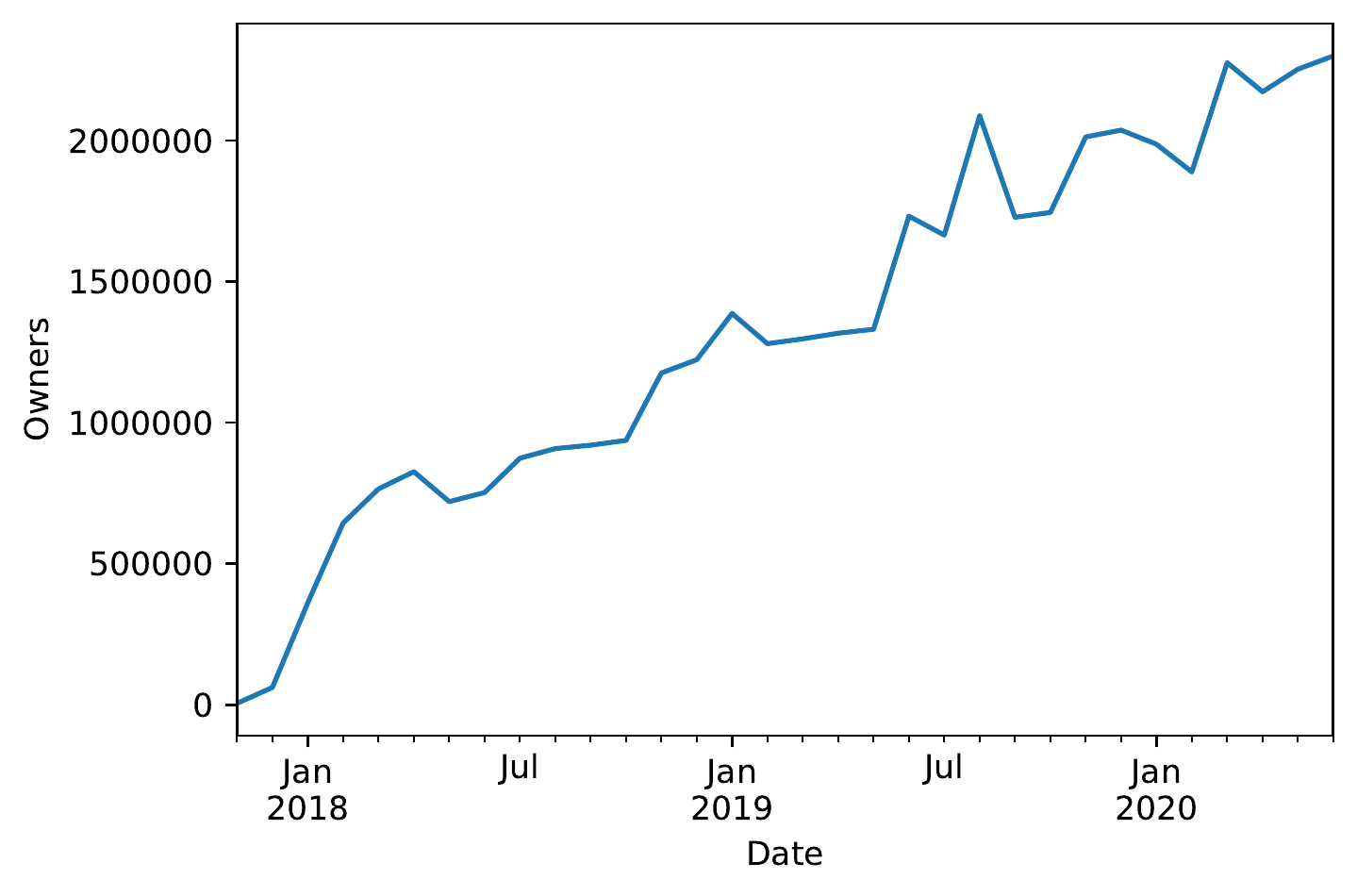}
                \caption{Slay the Spire: Owners over time}
                \label{FIG:SLown}
        \end{minipage}%
        \begin{minipage}{0.5\textwidth}
                \includegraphics[width=0.99 \textwidth]{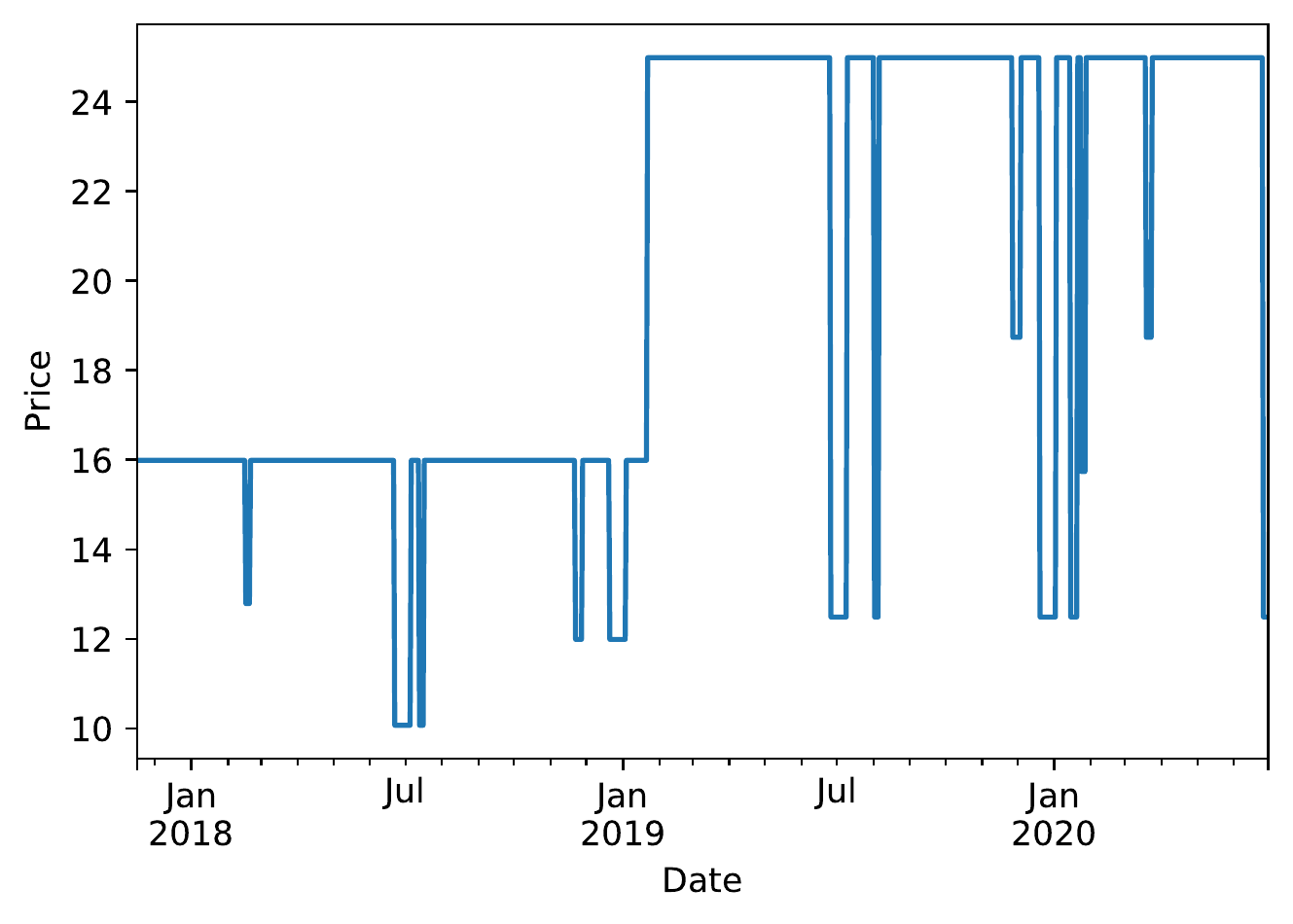}
                \caption{Slay the Spire: Price over time}
                \label{FIG:SLprice}
        \end{minipage}%
        
  \end{figure}
  \begin{figure}[h!]
        \centering%
        \begin{minipage}{0.5\textwidth}
                \includegraphics[width=0.99 \textwidth]{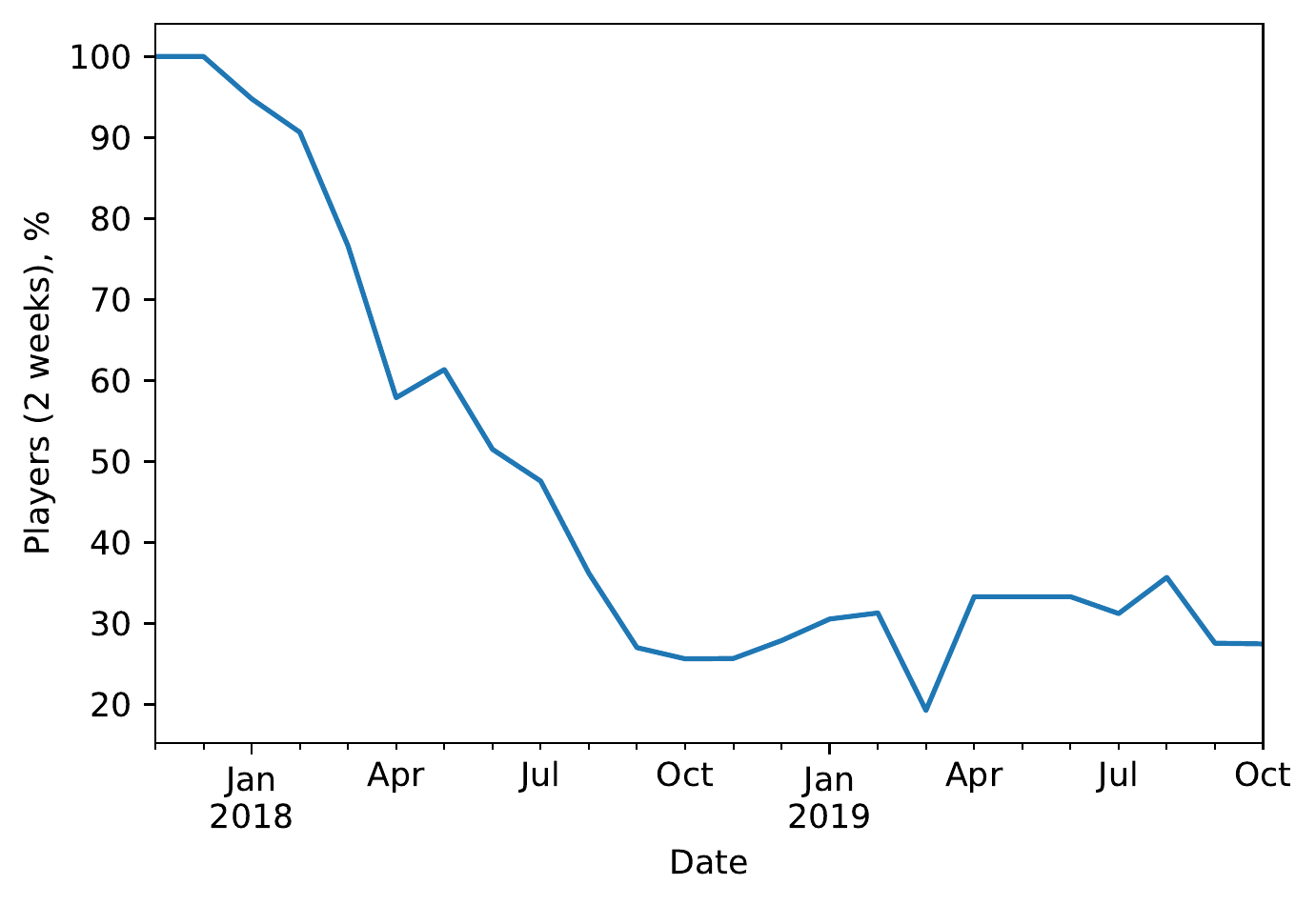}
                \caption{Slay the Spire: Recently active owners in $\%$}
                \label{FIG:SLren}
        \end{minipage}%
  \end{figure}
\end{document}